\def\beginsmall#1{\vspace{-\parskip}\begin{#1}\itemsep-\parskip}
\def\endsmall#1{\end{#1}\vspace{-\parskip}}
\newcommand{\hide}[1]{#1}
\renewcommand{\hide}[1]{}
\newcommand{\nhide}[1]{#1}
\def\squarebox#1{\hbox to #1{\hfill\vbox to #1{\vfill}}}
\newcommand{\ignore}[1]{}
\newcommand{\ra}{\rangle}
\newcommand{\la}{\langle}
\newcommand\latest{\mbox{\it latest\_accept}}
\newcommand\broadcast{\mbox{\slshape i\_values}}
\newcommand\broadcaster{\broadcast[G,*]}
\newcommand\TAU{\tau^{\mbox{\sc{\tiny G}}}}
\newcommand\tskew{t^{\mbox{\sc\tiny G}}_{\mbox{\sc\tiny skew}}}
\newcommand\ByzAgreementl{\mbox{\large\sc ss-Byz-Agree}\xspace}
\newcommand\ByzAgreement{\mbox{\sc ss-Byz-Agree}\xspace}
\newcommand\initiator{\mbox{\sc Initiator-Accept}\xspace}
\newcommand\broadcastpl{\mbox{\large\sc msgd-broadcast}\xspace}
\newcommand\broadcastp{\mbox{\sc msgd-broadcast}\xspace}
\newcommand\iacpt{\mbox{{\sc I}-accept}\xspace}
\newcommand\iacpts{\mbox{{\sc I}-accepts}\xspace}
\newcommand\addrho{}
\newcommand\BYZdur{\dagr}
\newcommand\dz{\Delta_0}
\newcommand\dnet{\Delta_{\mbox{\footnotesize net}}}
\newcommand\dagr{\Delta_{\mbox{\footnotesize agr}}}
\newcommand\drmv{\Delta_{\mbox{\footnotesize rmv}}}
\newcommand\dstb{\Delta_{\mbox{\footnotesize stb}}}
\newcommand\dv{\Delta_{\mbox{\footnotesize v}}}
\newcommand\dnode{\Delta_{\mbox{\footnotesize node}}}
\newcommand\dreset{\Delta_{\mbox{\footnotesize reset}}}
\newcommand\support{(support,G,m)}
\newcommand\ready{(ready,G,m)}
\newcommand\approve{(approve,G,m)}
\newcommand\readygm{ready_{\!_{\mbox{\,\tiny G,m}}}}
\newcommand\lastgm{last(G,m)}
\newcommand{\namedref}[2]{\hyperref[#2]{#1~\ref*{#2}}}
\newcommand{\sectionref}[1]{\namedref{Section}{#1}}
\newcommand{\figureref}[1]{\namedref{Figure}{#1}}
\newcommand{\lemmaref}[1]{\namedref{Lemma}{#1}}
\newcommand{\corollaryref}[1]{\namedref{Corollary}{#1}}
\newcommand{\claimref}[1]{\namedref{Claim}{#1}}
\newtheorem{theorem}{Theorem}
\newtheorem{lemma}{Lemma}
\newtheorem{definition}{Definition}
\newtheorem{claim}{Claim}
\newtheorem{corollary}{Corollary}
\def\squarebox#1{\hbox to #1{\hfill\vbox to #1{\vfill}}}
\newcommand{\ie}{\emph{i.e.,\ }}
\begin{document}




\title{Self-stabilizing Byzantine Agreement}



\author{
Ariel Daliot and Danny Dolev\footnote{Supported in  part by ISF.}
\\School of Engineering and Computer Science\\
The
Hebrew University, Jerusalem, Israel \\
Email:\ adaliot@gmail.com, danny.dolev@mail.huji.ac.il  }

\maketitle
\begin{abstract}
Byzantine agreement algorithms typically assume implicit initial
state consistency and synchronization among the correct nodes and
then operate in coordinated rounds of information exchange to reach
agreement based on the input values. The implicit initial
assumptions enable correct nodes to infer about the progression of
the algorithm at other nodes from their local state. This paper
considers a more severe fault model than permanent Byzantine
failures, one in which the system can in addition be subject to
severe transient failures that can temporarily throw the system out
of its assumption boundaries. When the system eventually returns to
behave according to the presumed assumptions it may be in an
arbitrary state in which any synchronization among the nodes might
be lost, and each node may be at an arbitrary state. We present a
self-stabilizing Byzantine agreement algorithm that reaches
agreement among the correct nodes in an optimal ration of faulty to correct, by using only the
assumption of eventually bounded message transmission delay. In the process of
solving the problem, two additional important and challenging
building blocks were developed: a unique self-stabilizing protocol
for assigning consistent relative times to protocol initialization
and a Reliable Broadcast primitive that progresses at the speed of
actual message delivery time.
\end{abstract}

\vspace{1mm} \noindent{\bf Categories and Subject Descriptors:}
C.2.4 {[Distributed Systems]}: {Distributed applications};

\vspace{1mm} \noindent {\bf General Terms:} Algorithms, Reliability,
Theory.

\vspace{1mm} \noindent {\bf Keywords:} Byzantine Agreement,
Self-Stabilization, Byzantine Faults, Pulse Synchronization,
Transient Failures,  Reliable Broadcast.

\section{Introduction}
\label{sec:intro} The Byzantine agreement (Byzantine Generals)
problem was first introduced by Pease, Shostak and Lamport
\cite{Agree80}. It is now considered as a fundamental problem in
fault-tolerant distributed computing. The task is to reach agreement
in a network of $n$ nodes in which up-to $f$ nodes may be faulty. A
distinguished node (\emph{the General} or \emph{the initiator})
broadcasts a value \emph{m}, following which all nodes exchange
messages until the non-faulty nodes agree upon the same value. If
the initiator is non-faulty then all non-faulty nodes are required
to agree on the same value that the initiator sent.

\ignore{On-going faults whose nature is not predictable or that
express complex behavior are most suitably addressed in the
Byzantine fault model. It is the preferred fault model in order to
seal off unexpected behavior within limitations on the number of
concurrent faults. With respect to the bounds on redundancy, the
Byzantine agreement problem has been shown to have no deterministic
solution if more than $n/3$ of the nodes are concurrently faulty
\cite{Agree82}.}

Standard deterministic Byzantine agreement algorithms operate in the
synchronous network model in which it is assumed that all correct
nodes initialize the agreement procedure (and any underlying
primitives) at about the same time. By assuming concurrent
initializations of the algorithm a synchronous rounds structure can
be enforced and used to infer on the progression of the algorithm
from the point of initialization. Moreover, there is always an
implicit assumption about the consistency of the initial states of
all correct nodes, or at least a quorum of them.

We consider a more severe fault-model in which in addition to the
permanent presence of Byzantine failures, the system can also be
subject to severe transient failures that can temporarily throw all
the nodes and the communication subsystem out of the assumption
boundaries. E.g. resulting in more than one third of the nodes being
Byzantine or messages of non-faulty nodes getting lost or altered.
This will render the whole system practically unworkable. Eventually
the system must experiences a tolerable level of permanent faults
for a sufficiently long period of time. Otherwise it would remain
unworkable forever. When the system eventually returns to behave
according to the presumed assumptions, each node may be in an
arbitrary state. It makes sense to require a system to resume
operation after such a major failure without the need for an outside
intervention to restart the whole system from scratch or to correct
it.

Classic Byzantine algorithms cannot guarantee to execute from an
arbitrary state, because they are not designed with
self-stabilization in mind. They typically make use of assumptions
on the initial state of the system such as assuming all clocks are
initially synchronized or that the initial states are initialized
consistently at all correct nodes (cf. from the very first
polynomial solution~\cite{PolyAgree82} through many others
like~\cite{FastAgree87}). Conversely, A \textit{self-stabilizing}
protocol converges to its goal from any state once the system
behaves well again, but is typically not resilient to the permanent
presence of faults.

In trying to combine both fault models, Byzantine failures present a
special challenge for designing self-stabilizing distributed
algorithms due to the ``ambition'' of malicious nodes to incessantly
hamper stabilization. This difficulty may be indicated by the
remarkably few algorithms resilient to both fault models (see
\cite{ByzStabilizer} for a review). The few published
self-stabilizing Byzantine algorithms are typically complicated and
sometimes converge from an arbitrary initial state only after
exponential or super exponential time (\cite{DolWelSSBYZCS04}). Recently efficient solutions were presented for the strict synchronization model in which an outside entity provides repetitive synchronized timing events at all correct nodes at once (\cite{PulseSync-DISC07}).

\looseness -1 In our model correct nodes cannot assume a common
reference to time or even to any common anchor in time and they
cannot assume that any procedure or primitive initialize
concurrently. This is the result of the possible loss of
synchronization following transient faults that might corrupt any
agreement or coordination among the correct nodes and alter their
internal states. Thus synchronization must be restored from an
arbitrary state while facing on-going Byzantine failures. This is a
very tricky task considering that all current tools for containing
Byzantine failures, such as \cite{FSQUAD85,FastAgree87}, assume that
synchronization already exists and are thus preempted for use. Our
protocol achieves self-stabilizing Byzantine agreement without the
assumption of any existing synchrony besides bounded message
delivery. In \cite{FTSS97} it is proven to be impossible to combine
self-stabilization with even crash faults without the assumption of
bounded message delivery.

Note that the problem is not relaxed even in the case of a one-shot
agreement, i.e. in case that it is known that the General will
initiate agreement only once throughout the life of the system. Even
if the General is correct and even if agreement is initiated after
the system has returned to its coherent behavior following transient
failures, then the correct nodes might hold corrupted variable
values that might prevent the possibility to reach agreement. The
nodes have no knowledge as to when the system returns to coherent
behavior or when the General will initiate agreement and thus cannot
target to reset their memory exactly at this critical time period.
Recurrent agreement initialization by the General allows for
recurrent reset of memory with the assumption that eventually all
correct nodes reset their memory in a coherent state of the system
and before the General initializes agreement. This introduces the
problem of how nodes can know when to reset their memory in case of
many ongoing concurrent invocations of the algorithm, such as in the
case of a faulty General disseminating several values all the time.
In such a case correct nodes might hold different sets of messages
that were sent by other correct nodes as they might reset their
memory at different times.

In our protocol, once the system complies with the theoretically
required bound of $3f<n$ permanent Byzantine faulty nodes in a
network of $n$ nodes and messages are delivered within bounded time,
following a period of transient failures, then regardless of the
state of the system, the goal of Byzantine agreement is satisfied
within $O(f')$ communication rounds (where $f^\prime \le f$ is the
actual number of concurrent faults). The protocol can be executed in
a one-shot mode by a single General or by recurrent agreement
initializations and by different Generals. It tolerates transient
failures and permanent Byzantine faults and makes no assumption on
any initial synchronized activity among the nodes (such as having a
common reference to time or a common event for triggering
initialization).

For ease of following the arguments and proofs, the structure and
logic of our $\ByzAgreement$ procedure is modeled on that of
\cite{FastAgree87}. The rounds in that protocol progress following
elapsed time. Each round spans a constant predefined time interval.
Our protocol, besides being self-stabilizing, has the additional
advantage of having a message-driven rounds structure and not
time-driven rounds structure. Thus the actual time for terminating
the protocol depends on the actual communication network speed and
not on the worst possible bound on message delivery time.

It is important to note that we have previously presented a
distributed  self-stabilizing Byzantine pulse synchronization
procedure in~\cite{bio-pulse-synch}. It aims at delivering a common
anchor in time to all correct nodes within a short time following
transient failures and with the permanent presence of Byzantine
nodes. We have also previously presented a protocol for making any
Byzantine algorithm  be self-stabilizing \cite{DDPBYZ-CS03},
assuming the existence of synchronized pulses. Byzantine agreement
can easily be achieved using a pulse synchronization procedure: the
pulse invocation can serve as the initialization event for round
zero of the agreement protocol. Thus any existing Byzantine
agreement protocol may be used, on top of the pulse synchronization
procedure, to attain self-stabilizing Byzantine agreement. The
current paper achieves Byzantine agreement without assuming
synchronized pulses. Moreover, we show in \cite{DDSSPS} that
synchronized pulses can actually be produced more efficiently atop
the protocol in the current paper. This pulse synchronization
procedure can in turn be used as the pulse synchronization mechanism
for making any Byzantine algorithm self-stabilize, in a more
efficient way and in a more general model than by using the pulse
synchronization procedure in \cite{bio-pulse-synch}.

An early version of the results covered in the current paper appeared in~\cite{DDSSBA-PODC06}. The current paper provides elaborated proofs and correct some mistakes that appear in the early version.

In~\cite{Widder-BootCS03} it is shown how to initialize Byzantine
clock synchronization without assuming a common initialization
phase. It can eventually also execute synchronized Byzantine
agreement by using the synchronized clocks. The solution is not
self-stabilizing as nodes are booted and thus do not initialize with
arbitrary values in the memory. \ignore{It has on the other hand a
constant convergence time with respect to the required rounds of
communication, whereas our current solution has a dependency on
$f',$ which is due to the self-stabilization requirement.}

In~\cite{EventualSynch05} consensus is reached assuming eventual
synchrony. Following an unstable period with unbounded failures and
message delays, eventually no node fails and messages are delivered
within bounded, say $d,$ time. At this point there is no synchrony
among the correct nodes and they might hold copies of obsolete
messages. This is seemingly similar to our model but the solution is
not truly self-stabilizing since the nodes do not initialize with
arbitrary values. Furthermore, the solution only tolerates stopping
failures and no new nodes fail subsequent to stabilization.
Consensus is reached within $O(d).$  That paper also argues that in
their model, although with Byzantine failures, consensus cannot be
reached within less than $O(f')\cdot d$ time, which is essentially
identical to our time complexity. Our solution operates in a more
severe fault model and thus converges in linear  time.

\section{Model and Problem Definition}
\label{sec:model}  The environment is a network of $n$ nodes that
communicate by exchanging messages.
We assume that the message passing medium allows for an
authenticated identity of the senders.
The communication network
does not guarantee any order on messages among different nodes,
though, when the network is functioning correctly, any message sent will eventually be delivered.
Individual nodes have no access to a central clock and there is no
external pulse system. The hardware clock rate (referred to as the
{\em physical timers}) at each non-faulty node has a bounded drift, $\rho,$
from real-time rate. Ensuant to transient failures there can be an
unbounded number of concurrent faulty nodes, the turnover rate
between faulty and non-faulty nodes can be arbitrarily large and the
communication network may behave arbitrarily.

\begin{definition} A node is {\bf non-faulty} at times
that it complies with the following:
\begin{enumerate}
\vspace{-0.5em}
\item \emph{(Bounded Drift)} Obeys a global constant
$0<\rho<1$ (typically $\rho \approx 10^{-6}$), such that for every
real-time interval $[u,v]:$ $$(1-\rho)(v-u)  \le \mbox{`physical timer'}(v) - \mbox{ `physical timer'}(u) \le (1+\rho)
(v-u).$$

\vspace{-2mm}
\item \emph{(Obedience)} Operates according to the instructed protocol.

\vspace{-2mm}
\item \emph{(Bounded Processing Time)} Processes any message of the instructed
protocol within $\pi$ real-time units of arrival time.\footnote{We
assume that the bounds include also the overhead of the operating
system in sending and processing of messages.}

\end{enumerate}
\end{definition}

A node is considered {\bf faulty} if it violates any of the above
conditions. A faulty node may recover from its Byzantine behavior
once it resumes obeying the conditions of a non-faulty node. In
order to keep the definitions consistent, the ``correction'' is not
immediate but rather takes a certain amount of time during which the
non-faulty node is still not counted as a correct node, although it
supposedly behaves ``correctly''.\footnote{For example, a node may
recover with arbitrary variables, which may violate the validity
condition if considered correct immediately.} We later specify the
time-length of continuous non-faulty behavior required of a
recovering node to be considered {\bf correct}.

\begin{definition}The communication network is {\bf non-faulty}
at periods that it complies with the
following:
\begin{enumerate}
\item Any message
arrives at its destination node within $\delta$ real-time units;

\item The sender's identity and the content of any message being received is not
tampered.
\end{enumerate}
\label{def:net_nonfaulty}

\end{definition}

Thus, our communication network model is a
``bounded-delay'' communication network. We do not assume the
existence of a broadcast medium. We assume that the network cannot
store old messages for arbitrary long time or lose any more
messages, once it becomes non-faulty.\footnote{A non-faulty network
might fail to deliver messages within the bound but will be masked
as a fault and accounted for in the $f$ faults. Essentially, we
assume that messages among correct nodes are delivered within the
time bounds.}


We use the notation $d\equiv( \delta + \pi)\times(1+\rho).$ Thus, when the
communication network is non-faulty, $d$ is the upper bound on the
elapsed time from the sending of a message by a non-faulty node
until it is received and processed by every non-faulty
node, as measured by the local clock at any non-faulty node.\footnote{Nodes that were not faulty when the message was
sent.}

Note that $n,$ $f$ and $d$ are fixed constants and thus non-faulty
nodes do not initialize with arbitrary values of these constants.

A recovering node should be considered correct only once it has been
continuously non-faulty for enough time to enable it to have deleted
old or spurious messages and to have exchanged information with the
other nodes.

\begin{definition} The communication network is {\bf correct}
following $\dnet$ real-time of continuous non-faulty
behavior.\footnote{We assume $\dnet\ge d.$}
\end{definition}

\begin{definition} A node is {\bf correct} following $\dnode$  real-time of continuous
non-faulty behavior during a period that the communication network
is correct.\footnote{ $\dnode$ is defined in the next section}
\end{definition}

\begin{definition}\label{def:system-coherence} \emph{(System Coherence)} The system is said to
be {\bf coherent} at times that it complies with the following:
%
\beginsmall{itemize}
\item 
\emph{(Quorum)} There are at least $n-f$ correct nodes,\footnote{The condition can be
replaced by $(n+f)/2$ correct nodes with some modifications to the
structure of the protocol.} where $f$ is the upper bound on the
number of potentially non-correct nodes at steady state.
\endsmall{itemize}
\end{definition}

Hence, when the system is not coherent,  there can be an unbounded
number of concurrent faulty nodes; the turnover rate between the
faulty and non-faulty nodes can be arbitrarily large and the
communication network may deliver messages with unbounded delays, if
at all. The system is considered coherent, once the communication
network and a sufficient fraction of the nodes have been non-faulty
for a sufficiently long time period for the pre-conditions for
convergence of the protocol to hold. The assumption in this paper,
as underlies any other self-stabilizing algorithm, is that the
system eventually becomes coherent.

\begin{definition}\label{def:system-stable} \emph{(System Convergence)} The system is said to
be {\bf stable} at times that it complies with the following:
%
\beginsmall{itemize}
\item 
\emph{(converging)} The system has been coherent for $\dstb$ time units;\footnote{We define $\dstb$ in the next section.}
\item
\emph{(stability)} The system remained coherent since that time.\endsmall{itemize}
\end{definition}

It is assumed that each node has a local timer that proceeds at the
rate of real-time.  The actual reading of the various timers may be
arbitrarily apart, but their relative rate is bounded in our model.
 To distinguish between a real-time value and a node's
local-time reading we use $t$ for the former and $\tau$ for the
latter. The function $rt(\tau_p)$ represents the real-time when the
timer of a non-faulty node $p$ reads (or read) $\tau_p$ at the current execution.

Observe that the local time at a node may wrap around, since we assume transient faults.  
The protocol and the primitives presented below require measuring only intervals of times.  It 
is assumed that the local time wrap around is larger than a constant factor of the maximal 
interval of time need to be measured. This way a node can uniquily measure any necessary 
intervals of time.

Since nodes measure only intervals  of time that span
several $d$, and $d$ itself includes a worst case drift factor, by definition, then  $d$ is an upper bound on the
time it takes to send and process messages  among correct nodes, measured by each local
timer, \ie including the drift factor.

\section{The \ByzAgreementl protocol}\label{sec:byzagree}

We consider the Byzantine agreement problem in which a
\emph{General} broadcasts a value and the correct nodes agree on the
value broadcasted.  In our model any node can be a General. An
instance of the protocol is executed per General, and a correct
General is expected to send one value at a time.\footnote{One can
expand the protocol to a number of concurrent invocations by using
an index to differentiate among the concurrent invocations.}
The
target is for the correct nodes to associate a local-time with the
protocol initiation by the General and to agree on a specific value
associated with that initiation, if they agree that such an
initiation actually took place. There is a bound on
how frequent a correct General
may initiate agreements, though Byzantine nodes
might try to trigger agreements on their values at an arbitrary rate.

The \ByzAgreement protocol is composed of the Agreement procedure (the main body of the protocol)
and two primitives: the  primitive \initiator and the \broadcastp one (as detailed later).
The General, $G$, initiates an agreement on a value $m$ by disseminating the message
$(Initiator, G, m)$ to all nodes. Upon receiving the General's
message, each node invokes the \ByzAgreement protocol, which in
turn invokes the primitive \initiator. Alternatively, if a
correct node did not receive the General's message but concludes that enough nodes have invoked the protocol
(or the primitive) it will participate by executing the
appropriate
parts of the primitive \initiator (but will not invoke it),
and following the completion of the primitive that node  may participate in the corresponding parts of the agreement procedure.
 
We will prove the following properties of the \ByzAgreement protocol. When the system is stable, if
all correct nodes invoke the protocol within a ``small'' time-window, as
will happen if the General is a correct node, then it is ensured
that the correct nodes agree on a value for the General. 
If the General is a correct node, the agreed value will be the value sent by the General. 
When not all
correct nodes happen to invoke the \ByzAgreement protocol within a
small time-window, as can happen if the General is faulty, then if
any correct node accepts a non-null value, all correct nodes will
accept and agree on that value.

\begin{figure*}[t] \center \fbox{\begin{minipage}{5.1in}
\footnotesize
\begin{alltt}
\setlength{\baselineskip}{3.5mm}
\noindent Protocol {\bf \ByzAgreement} on ($G,m$)
\mbox{\ }\hfill\textit{/* Executed at node $q.$ $\tau_q$ is the local-time at $q.$ */}\\
%dd changed order and added details to the comments below
%\mbox{\ }\hfill\textit{ /* Invoked at
%node $q$ upon arrival of a message (Initiator$,G,m$) from node $G$. */}\\
\mbox{\ }\hfill\textit{/* Block~Q is executed only when (and if) invoked. */}\\
\mbox{\ }\hfill\textit{/* The rest is executed following a setting of a value to $\TAU_q$. */}\\
%dd updated to:
\mbox{\ }\hfill\textit{/* At most one of blocks R through U is executed per such a setting of  $\TAU_q$. */ }\\

Q0.\ \ If $q=G$ then send (Initiator$,G,m$) to all .\mbox{\ }\hfill\textit{ /*
initiation of the primitive by the leader */}\\
\\

Q1.\ \  If received (Initiator$,G,m$) invoke {\bf \initiator}$(G,m)$.\\\mbox{\ }\hfill\textit{ /*
determines
$\TAU_q$ and a value $m'$ for node $G$ */}\\
\\
R1.\ \ {\bf if} {\bf \iacpt} $\la G,m',\TAU_q\ra$ {\bf and} $\tau_q-\TAU_q\le 4d$ {\bf then}\\
R2.\ \tre $value:=\la G,m'\ra;$\\
R3.\ \tre {\bf \broadcastp}$(q,value,1)$;\\
R4.\ \tre {\bf stop} and {\bf return} $\la value, \TAU_q\ra$. \\

S1.\ \tb {\bf if} by $\tau_q,$  $\tau_q\le
\TAU_q+(2r+1)\cdot\Phi,$ \\\due\due{\bf accepted} $r$
%dd updated
%distinct messages $(p_i,\la G,m'^\prime\ra,\tau_i,i),$ \\
%\ \tre\tb where $\forall i,j\;\;1\le i \le r,$ and $p_i\neq p_j\neq G$ {\bf then}\\
distinct messages $(p_i,\la G,m'^\prime\ra,i),$ $1\le i \le r$, \\
\ \tre\tb where $ \forall i,j$ $1\le i,j \le r$ and $p_i\neq p_j\neq  G,$ {\bf then}\\
S2.\ \tre  $value:=\la G,m'^\prime\ra;$ \\
S3.\ \tre {\bf \broadcastp}$(q,value,r+1)$;\\
S4.\ \tre {\bf stop} and {\bf return} $\la value, \TAU_q\ra$. \\

T1.\ \tb {\bf if} by $\tau_q,$ $\tau_q>
\TAU_q+(2r+1)\cdot\Phi$,
 $|broadcasters| < r-1$  {\bf then}\\
T2.\ \tre{\bf stop} and {\bf return} $\la \perp, \TAU_q\ra$. \\

U1.\ \tb {\bf if} $\tau_q> \TAU_q+(2f+1)\cdot\Phi$ {\bf then}\hfill\textit{}\\
U2.\ \tre{\bf stop} and {\bf return} $\la \perp, \TAU_q\ra$. \\
\\
{\bf cleanup:}\\
\tb -- Erase any value or message older than $(2f+1)\cdot\Phi+3d$
time units.\\
\tb -- $3d$ after returning a value reset  {\bf \initiator}, $\TAU_q,$  and {\bf \broadcastp}.
\end{alltt}
\normalsize
\end{minipage} }
\caption{The \ByzAgreement protocol}\label{alg:Byz-alg}
\end{figure*}

For ease of following the arguments and the logic of our
\ByzAgreement protocol, we chose to follow the building-block
structure of \cite{FastAgree87}. 
The
primitive \broadcastp,  presented in
\sectionref{sec:reliable-bdcst}, replaces
the broadcast
primitive that simulates authentication in~\cite{FastAgree87}.
The main differences between the
original synchronous broadcast primitive and \broadcastp are
two-folds: first, the latter executes rounds that are anchored at
some agreed event whose local-time is supplied to the primitive
through a parameter; second, the conditions to be satisfied at each
round at the latter need to be satisfied by some time span that is
a function of the round number and need not be executed only during
the round itself. This allows nodes to rush through the protocol in
the typical case when messages among correct nodes happen to be delivered faster than
the worse case round span.

The \ByzAgreement protocol  needs to take into consideration that
correct nodes may invoke the agreement procedure at arbitrary times
and with no knowledge as to when other correct nodes may have
invoked the procedure. A mechanism is thus needed to make all
correct nodes attain some common notion as to when the General may have sent a value, and what that value is. The differences of the real-time
representations of the different nodes' estimations should be
bounded. This mechanism is satisfied by the primitive \initiator,
presented in
\sectionref{sec:initiator}. The use of this initial step in the protocol provides the nodes with an initial potential value of the General, and as a result number of ``rounds'' necessary to reach agreement is two less than those of~\cite{FastAgree87}.

We use the following notations in the description of the agreement
procedure
and the related primitives: \vspace{-0.5em}
\begin{itemize}

\item Let $\Phi$ be the duration of time equal to
$(\TAU_{skew}+ 2d)$ local-time units on a correct node's timer,
where $\TAU_{skew}=6d$ in the context of this paper. Intuitively,
$\Phi$ is the duration of a ``phase'' on a correct node's timer.

\vspace{-2mm}\item $\dagr,$ the upper bound on the time it takes to run the agreement protocol, will be equal 
to $(2f+1)\cdot\Phi.$

\item $\dz=13d$, the minimal time between consecutive invocations of the protocol by the General, for  
different values.

\vspace{-2mm}\item $\drmv=(\dagr+\dz),$ the time after which old values are decayed.

\vspace{-2mm}\item $\dv=(15d+2\drmv)$,  the minimal time between two invocations of the protocol by the General, for the 
same value.

\vspace{-2mm}\item $\dnode=\dv+\dagr,$ the time it takes for a non-faulty node to be considered correct.

\vspace{-2mm}\item $\dreset=20d+4\drmv,$   the time during which the General sends nothing, when it notices a failure in 
agreeing on a value it sent.

\vspace{-2mm}\item $\dstb=2\dreset,$  stabilization time of the system.

\vspace{-2mm}\item $\perp$ denotes a null value.

\vspace{-2mm}\item In the primitive \initiator:

\begin{itemize}

\vspace{-3mm}\item An \emph{\iacpt}\footnote{An \emph{accept} is
issued within \broadcastp.} is issued on values sent by $G$.

\vspace{-1mm}\item $\TAU_q$ denotes the local-time estimate, at node
$q,$ as to when the General
has sent the value associated with the \iacpt
by node $q.$

\end{itemize}

\end{itemize}

In the context of this paper we assume that a correct General conform with the following criteria when sending its messages.\\

\noindent{\bf Sending Validity Criteria:} A non-faulty General $G$ sends (Initiator$,G,m$) provided that:
\begin{itemize}
\item[{[IG1]}] At least $\dz$ time passed from the sending of the previous initiation message by $G$.
\item[{[IG2]}] At least $\dv$ time passed from the sending  of previous initiation message with the same value $m$ by $G$.
\end{itemize}

Notice that both limitations can be circumvented by adding counters to 
concurrent agreement initiations.  The difference between the two cases has to do with the ability to converge from an arbitrary initial state.  If a node can send the same message again and again repeatedly, there is a way for the adversary to confuse of convergence protocol, as can be seen in the next section.

\begin{definition} We say:
\beginsmall{itemize}
\item A node $p$ {\bf decides} at time $\tau$ if it
stops at that local-time and returns $value \ne \perp.$

\item A node $p$ {\bf aborts} if it stops and returns
$\perp.$

\item A node $p$ {\bf returns} a value if it either aborts or
decides.
\endsmall{itemize}
\end{definition}

The \ByzAgreement protocol is presented (see
\figureref{alg:Byz-alg}) in a somewhat different style than the
original protocol in \cite{FastAgree87}. Each round has a
precondition associated with it: if the local timer value associated
with the initialization by the General is defined and the
precondition holds then the step is to be executed. It is assumed
that the primitives' instances invoked as a result of the
\ByzAgreement protocol are implicitly associated with the agreement
instance that invoked them. A node stops participating in the
procedures once it returns a value and it stopped participating in the invoked primitives $3d$ 
time units after that. We use the term participate to refer to a node that executes the 
protocol's (and primitives') steps.  The term invoke will refer to a node that also executes the 
first block of the protocol (Block~Q) or primitive (Block~K), as each correct node would do if 
the General is a correct one. A node accumulates messages associated with the protocol even before it invokes it or participates in it. Such messages are decayed if the node doesn't invoke or participate in the protocol, or being processed once it does.

The \ByzAgreement protocol satisfies the following typical
properties, provided that the system is stable:

\noindent{\bf Agreement:} If any connect node decides $(G,m)$, all correct nodes decide the same;

\noindent{\bf Validity:} If the General invokes \ByzAgreement then each correct node decides on the value sent by $G$;

\noindent{\bf Termination:} The protocol terminates in a finite
time.\\

Note that in light of our definitions, the Agreement property actually says that if the protocol returns a value~$\neq\perp$ at any correct node,  it returns the same value at all correct nodes.

The \ByzAgreement protocol also satisfies the following timing properties:\\

\noindent{\bf Timeliness:}
\vspace{-1mm}
\begin{enumerate}
\item (agreement)
If a correct node $q$ decides on
$(G,m)$ at $\tau_q$ then any correct node $q^\prime$ decides on
$(G,m)$ at some $\tau_{q^\prime}$ such that,
\begin{enumerate}
\item
$|rt(\tau_q)-rt(\tau_{q^\prime})|\le 3d,$ and if validity holds,
then $|rt(\tau_q)-rt(\tau_{q^\prime})|\le 2d;$
\item
$|rt(\tau^G_q)-rt(\tau^G_{q^\prime})|\le 6d;$
\item
$rt(\tau^G_q), rt(\tau^G_{q^\prime}) \in [t_1-2d,t_2],$ where
$[t_1,t_2]$ is the interval within which each correct node, $p$,  that obtained the $\tau^G_p$ appearing in (b) following the 
invocation of \ByzAgreement$(G,m)$, did so;
\item $rt(\tau^G_q)\le rt(\tau_q)$ and $rt(\tau_q)-rt(\tau^G_q)\le \BYZdur$.
\end{enumerate}

\item (validity)
If all correct nodes invoked the protocol in an interval
$[t_0,t_0+d],$ as a result of some value $m$ sent by a correct
General $G$ that conform with the Sending Validity Criteria, then for every
correct node $q$, the decision time $\tau_q$, satisfies $t_0-d \le
rt(\tau^G_q)\le rt(\tau_q)\le t_0+4d.$

\item (termination) The protocol terminates within $\dagr$ time units of
invocation, and within $\dagr+7d$ in case it was not invoked
explicitly.

\item (separation) Let $p$ and $q$ be two correct nodes that decided on  agreements regarding $G$,
then
%
\begin{enumerate}
 \item  
  for $m\neq m^\prime,$ 
    $|rt(\TAU_q)-rt(\TAU_p)|> 4d;$ 
  \item  
  for $m= m^\prime,$ either
    $|rt(\TAU_q)-rt(\TAU_p)|\leq 6d$ or $|rt(\TAU_q)-rt(\TAU_p)|>2\drmv-3d$
   \end{enumerate}
\end{enumerate}

Note that the bounds in the above property is with respect to $d$,
the bound on message transmission time among correct nodes and not
the worse case deviation represented by $\Phi$.

Observe that since there is no prior notion of the possibility that
a value may be sent, it might be that some nodes associate a $\perp$
with a faulty sending and others may not notice the sending at all.

The proof that the \ByzAgreement protocol meets its properties
appears in \sectionref{ssec:jbyzagree}.

\section{The primitive \initiator}\label{sec:initiator}

In a typical agreement protocol a General that wants to send
some value broadcasts it in a specific round (say the first round of the
protocol). From the assumptions on synchrony all correct
nodes can check whether a value was indeed sent at the specified
round and whether multiple (faulty) values were sent. In the
transient fault model no such round number can be set beforehand
adjoined with the broadcast. Thus a faulty General has more power in
trying to fool the correct nodes by sending its values at completely
different times to whichever nodes it chooses.

The primitive \initiator aims at making the correct nodes associate
a local time with the invocation of the protocol (and primitive) by (the possibly
faulty) General, and to converge to a single candidate value for the
agreement to come. Since the full invocation of the protocol by a
faulty General might be questionable, there may be cases in which
some correct nodes will return a $\perp$ value and others will not
identify the invocation as valid. But, as we will prove, if any correct node happens to return a
value $\neq\perp$ within a given timeframe, all correct nodes will return the same value.

In order to initiate the process of broadcasting its value (one value at a time)  the General sends (Initiator$,G,m$) to all nodes, 
provided some validity criteria are met, as we detail below.  
As a response to that initiation message, each non-faulty node (including the General)  invokes the   primitive \initiator.
Each node dynamically executes the primitive, whenever relevant messages are being received,
to obtain an estimate to its (relative) local-time at which the primitive may
have been initiated. The primitive guarantees that all correct nodes'
estimates are within some bounded real-time of each other. \\

To ensure convergence we need to add to the two Sending Validity Criteria of \sectionref{sec:byzagree} a third one:
\begin{itemize}
\item[{[IG3]}] No invocation of \initiator$(G,*)$ failed in the last  $\dreset$ time, where an invocation is considered failed if any of the following is failed: executing
lines L4, M4 or N4 of the \initiator primitive (see \figureref{alg:init-agree}) is not  completed within $2d$, $3d$ or $4d$  of the invocation, respectively.
\end{itemize}


\begin{figure*}[t]
\center \hspace{-1.7cm} \fbox{\begin{minipage}{5.6in} \footnotesize
\begin{alltt}
\setlength{\baselineskip}{3.5mm}

\noindent Primitive \initiator$(G,m)$
\mbox{\ }\hfill\textit{/* Executed at node $q.$ $\tau_q$ is the local-time at $q.$ */}\\
\mbox{\ }\hfill\textit{/* Block~K is executed only when (and if) the primitive is explicitly invoked. */}\\
\mbox{\ }\hfill\textit{/* Lines L1 through N3 are repeatedly executed upon receiving messages. */}\\

K1.\ {\bf if} $\broadcast[G,m']=\perp$ for every $m'\neq m$  {\large \&} $last_q(G)=\bot$ {\large \&}\\
\tre  did not send any $(support,G,*)$ in 
%the interval  
$[\tau_q-d,\tau_q]$ {\large \&}\mbox{\ } \hfill\textit{/* allow for recent messages */}\\
\tre 
 $last_q(G,m)=\bot$ 
 at $\tau_q-d$
 % the following was removed since the first condition covers it
 %it: $\broadcast[G,*]=\perp$  {\large \&} 
{\bf then}\\
K2.\ \tb $\broadcast[G,m]:=\tau_q-d$;\hfill\textit{/* recording time */}\\
\due\hspace{1mm}
send $(support,G,m)$ to all; $last_q(G,m)=\tau_q$; \\

L1.\ {\bf if} received $(support,G,m)$ from $\ge n-2f$  distinct nodes\\
\tre \tb  in the interval $[\tau_q-\alpha,\tau_q]$ for  $\alpha\le 4 d$ {\bf then}\hfill\textit{/* shortest interval */}\\
L2.\ \tb $\broadcast[G,m]:=\max\{\broadcast[G,m],\;(\tau_q-\alpha-2d)\}$; $last_q(G,m)=\tau_q$;
\hfill\textit{/* recording time */}\\
L3.\ {\bf if} received $(support,G,m)$  from $\ge n-f$ distinct nodes\\
\tre \tb  in the interval $[\tau_q-2d,\tau_q]$  {\bf then}\\
L4.\ \tb  send $(approve,G,m)$ to all; $last_q(G,m)=\tau_q$;
%dd added
\hfill\textit{/* if not recently sent */}\\

M1.\ {\bf if}
received
$(approve,G,m)$ from $\ge n-2f$ distinct nodes
\\
\tre \tb   in the interval $[\tau_q-5d,\tau_q]$  {\bf then}\\
M2.\ \tb
$\readygm=$`true'; $last_q(G,m)=\tau_q$;\\
M3.\ {\bf if}
received $(approve,G,m)$ from $\ge n-f$
distinct nodes \\
\tre \tb  in the interval $[\tau_q-3d,\tau_q]$  {\bf then}\\
M4.\ \tb
send $(ready,G,m)$ to all; $last_q(G,m)=\tau_q$;
\\

N1.\ {\bf if}
$\readygm$
{\large \&}  received $(ready,G,m)$ from $\ge
n-2f$ distinct  nodes {\bf then}\\
N2.\ \tb
send $(ready,G,m)$ to all; $last_q(G,m)=\tau_q$;\\
N3.\ {\bf if}
$\readygm$
{\large \&}  received $(ready,G,m)$ from $\ge n-f$
distinct  nodes {\bf then}\\
N4.\tb\  $\TAU_q:=\broadcast[G,m]$; $\broadcast[G,*]:=\bot$;\\
\due\hspace{1mm}  remove all $(G,m)$ messages and ignore all $(G,m)$ messages for $3d$;\\
\due\hspace{1mm}  \iacpt $\la G,m,\TAU_q\ra$;  $last_q(G,m)=\tau_q$; $last_q(G):=\tau_q$.\\

{\bf cleanup:}\\
\tb Remove  any value or message that is older than $\drmv$ time units;\\
\tb If $last_q(G)> \tau_q$ or $last_q(G)<\tau_q-(\dz-6d)$  then $last_q(G):=\perp.$\\
\tb If $last_q(G,m)> \tau_q$ or $last_q(G,m)<\tau_q-(2\drmv+9d)$ then $last_q(G,m):=\perp.$
\end{alltt}
\normalsize
\end{minipage} }
\caption{The primitive \initiator} \label{alg:init-agree}
\end{figure*}

The General, before initiating the primitive, removes from its memory all previously received messages associated with any previous invocation of the primitive with him as a General.

Each correct node records the local-time at which it receives each
message associated with the invocation of the primitive, for the specific General.  Whenever a new message arrives the node records it and its time of arrival. The node goes through the primitive and considers all the various lines of the primitive, one by one, and acts accordingly.  Notice that the node processes all messages, even if it did not invoke the primitive.

We say that a node does an {\bf \iacpt } of a value sent by the
General if it accepts this value as the General's initial value, and
$\TAU_q$ is the estimated local-time at $q$ associated with the
initiation of the primitive by the General.

Each node maintains a list $\broadcast[G,*]$ for the possible concurrent
values sent by the General G, where each non-empty entry is a
local-time associated with the possible invocation of the primitive with that entry value. 
The list should contain at most a single value if the General it correct.
Each node also maintains for each non empty entry a time variable, $\lastgm,$ that indicates 
the latest time at which any stage of the primitive was executed regarding the specific value 
$m$. To ensure the compliance of the General with the rules of initiating the primitive each 
node also maintains an additional time variable, $last(G)$, measuring the minimal time 
between two consecutive invocations of the primitive by the General.

Each entry has an expiration time, and messages have a decay time, so after some time all residue of previous invocations are removed. The variables are set to $\perp$ as a result of resetting them.
The \initiator
primitive requires the knowledge of the state of the vector
$\broadcast[G,*]$ $d$ time units in the past. It is assumed that
the data structure reflects that information.

\begin{definition}
The
data structure of a node is {\em fresh} with respect to a value $m$ if  $d$ units of time ago
$\broadcast[G,*]$ did not contain any value and the time variables $\lastgm,$ and $last(G)$ both were $\perp$.
\end{definition}

Thus, as we prove later on, when the data structure is fresh and a correct node receives an initiation message form a correct $G$ it will be able to execute successfully Block~K of the  \initiator primitive.

Before stating the properties that the primitive \initiator satisfies we give some intuition regarding it.  The primitive is composed of five sections: four of them are commands to execute in response of receiving messages and the final one is a cleanup process that is carried on in the background.   

\underline{Block~K}  states the rules for the invocation of the primitive. It is executed as a result of receiving a  (Initiator$,G,m$) message from $G.$ 

Line~K1 lists the tests a node carries to ensure that $G$ respects the Sending Validity 
Criteria. 
The nodes tests whether any other broadcasts of messages were processed not too long 
ago. Since the message from $G$ may take $d$ to arrive, and responses to such a message 
from other correct nodes may have been received already. Therefore the node checks what 
was the status of its data structure $d$ time units ago.  It checks whether it recently 
responded to any initiation message or whether it processed the relevant  message from 
other nodes only in the last $d$ units of time. 

Line~K2: the node sends its support message to all nodes, and marks the time of sending. The sending event entry is marked as a time prior to the invocation of the primitive, therefore $d$ is reduced.

\underline{Block~L} intends to capture the fact that enough correct nodes have sent the support messages within a short period of one another.  If that happens an approved message is being produced.

Line~L1:  The node tests whether at least one correct node has invoked the primitive in the last 4d time units.

Line~L2: The node marks that latest such event.  The node reduces $2d$ to mark a time prior to an invocation event would $G$ was a correct node.

Line~L3: The node checks whether at least $t+1$ correct nodes have sent support within $d$ of each other.  Notice that since some messages may take $0$ time to arrive and some may take $d$ the interval is $2d$.  Notice also that if at some correct node this is true, at all correct nodes the test of Line~L1 is true.

Line~L4: Since the node knows that every correct node will end up executing Line~L2, and $d$ after that all will have Line~L3 enabled, it is safe to send an approve message.\\

The general controls the previous blocks be deciding when to send the invocation messages to which correct node.  We now moved to two stages that are controlled by the correct nodes that send the approval messages, and there is a need to prevent transient messages that may happen to be in the memory of the correct nodes from separating the agreement among correct nodes.

\underline{Block~M} intends to verify that all correct nodes have moved a stage before we move to the acceptance stage.  If enough correct have sent approve within a small time window a ready message will be produced.

Line~M1: The node tests whether at least one correct node has sent a recent approve.

Line~M2: In such a case, the correct node marks that by setting the ready variable, which will mark its potential readiness to move to the final stage and to to join others in Line~N2.

Line~M3: The node checks whether every correct node will notice the sending of an approve message. 

Line~M4: In such a case the node sends a ready message and move to the final stage.

\underline{Block~N} is the only block that is not timed by a short interval, in order to enable nodes that may be initial spread around to collect their actions.  If enough have noticed the readiness to accept the message by the general, all will.

Line~N1: The node tests whether at least one correct node has sent a ready message and whether it is ready to move to the final stage.

Line~N2: In such a case, the correct node amplifies the sending and sends its own ready message.

Line~M3: The node checks whether every correct node will notice the sending of ready messages.

Line~M4: in such a case the node set the potential time of the invocation of the protocols by $G$ and accepts the sending.  In order to prevent recurrence accepting the node clear messages and ignore messages for a short time period.

\underline{Block~Cleanup} has three parts. Any message that is too old is removed.  The other two parts rest the two variables that measure the elapse time between two consecutive invocations of the same value and of different values. The reason that the expiration of $last_q(G,m)$ is almost twice $\drmv$ is to separate consecutive sending of the same value from the possible transient messages at startup.   

Recall that a node is required to keep time stamps associated with the various entries in its data structures and the messages it has received. 
Each time-stamped
entry that is clearly wrong, with respect to the current clock reading of $\tau_q$,
is removed;  \ie future time stamps or too old time stamps.\\

The primitive \initiator satisfies the following properties, provided that the system is stable:
\begin{enumerate}
\item[\small\bf {[IA-1]}] \emph{(Correctness)} 
If a correct General $G$  invokes \initiator$(G,m)$ at $t_0$
 then:
\begin{enumerate}\item[\small\bf {[1A]}]All correct nodes \iacpt~$\la G,m,\TAU\ra$ within $4d$
time units of the invocation; 
\item[\small\bf {[1B]}]
All correct nodes \iacpt$\la G,m,\TAU\ra$ within $2d$ time units
of each other;
\item[\small\bf {[1C]}] For every pair of correct nodes $q$ and $q^\prime$ that \iacpts $\la G,m,\TAU_q\ra$
and $\la G,m,\TAU_{q^\prime}\ra,$ respectively:
$$|rt(\TAU_{q^\prime})-rt(\TAU_q)|\le d\,;$$
\item[\small\bf {[1D]}] For each correct node $q$ that \iacpts $\la
G,m,\TAU_q\ra$ at $\tau_q,$ $t_0-d\le rt(\TAU_q) \le rt(\tau_q)\le
t_0+4d.$
\end{enumerate}
  \item[\small\bf {[IA-2]}] {\em (Unforgeability)} If no correct node invokes \initiator$(G,m),$ then no correct node 
\iacpts $\la G,m,\TAU\ra.$

  \item[\small\bf {[IA-3]}] {\em ($\dagr$-Relay)} If a correct node $q$ \iacpts  $\la
  G,m,\TAU_q\ra$ at real-time $t$, such that
  $t-rt(\TAU_q)\le\dagr,$
   then:
\begin{enumerate}\item[\small\bf {[3A]}] Every correct node $q^\prime$ \iacpts  $\la
G,m,\TAU_{q^\prime}\ra,$ at some real-time $t^\prime,$ with
  $|t-t^\prime|\le2 d$ and
   $$|rt(\TAU_q)-rt(\TAU_{q^\prime})|\le 6d\,;$$
  \item[\small\bf {[3B]}] Moreover, for every correct node $q^\prime$,  $rt(\tau^G_{q^\prime}) \le t_2,$ where some correct node invoked the primitive \initiator at $t_2;$
  \item[\small\bf {[3C]}]For every correct node $q^\prime,$$\;rt(\TAU_{q^\prime})\le rt(\tau_{q^\prime})$
  and $rt(\tau_{q^\prime})-rt(\TAU_{q^\prime})\le \BYZdur+8d.$
  \end{enumerate}
   \item[\small\bf {[IA-4]}] {\em (Uniqueness)} If a correct node $q$ \iacpts  $\la
  G,m,\TAU_q\ra,$ and a correct node \iacpts  $\la
  G,m^\prime,\TAU_p\ra$, then
 \begin{enumerate}
 \item[\small\bf {[4A]}]   
  for $m\neq m^\prime,$ 
    $|rt(\TAU_q)-rt(\TAU_p)|> 4d;$ 
  \item[\small\bf {[4b]}]   
  for $m= m^\prime,$ either
    $|rt(\TAU_q)-rt(\TAU_p)|\leq 6d$ or $|rt(\TAU_q)-rt(\TAU_p)|>2\drmv-3d$.
   \end{enumerate} 
\end{enumerate}


When the primitive is invoked the node executes Block~K.
A node may receive messages related to the primitive, even in case
that it did not invoke the primitive. In this case it
executes the rest of the blocks of the primitive, if the appropriate
preconditions hold. A correct node repeatedly executes each line until it execute Line~N4.  So we assume that a node may send the same message several times. We ignore possible optimizations that can save such repetitive sending of messages.  
Once a node executes Line~N4 it removes all associated messages and ignores related messages for some time, so Line-N4 is not executed more than once per execution of the primitive.

Notice that since Block~N is not timed, its expiration is determined by the expiration of old messages, which leads to the definition of  $\drmv.$ and $\dv.$ 
Following the completion of \ByzAgreement, the
data structures of the related \initiator instance are reset.

\looseness -1 
The proof that the \initiator\ primitive satisfies the {\small
[IA-*]} properties, under the assumption that $n>3f,$ appears in
\sectionref{ssec:inagree}. The proofs also show that from any initial state, after $\dstb$ the system becomes stable.

\section{The \broadcastpl Primitive}\label{sec:reliable-bdcst}
This section presents the \broadcastp (a message driven broadcast)
primitive, which {\bf accept}s messages being
{\bf broadcast}ed.
The primitive is invoked within
the \ByzAgreement protocol presented in \sectionref{sec:byzagree}.
The primitive follows the broadcast primitive of Toueg, Perry, and
Srikanth~\cite{FastAgree87}. In the original synchronous model,
nodes advance according to rounds that are divided into phases. This
intuitive lock-step process clarifies the presentation and
simplifies the proofs. Here the primitive \broadcastp is presented
without any explicit or implicit reference to absolute time or round number, rather an anchor
to the potential initialization point of the protocol is passed as a
parameter by the calling procedure. The properties of the \initiator
primitive guarantee a bound between the real-time of the anchors of
the correct nodes. Thus a general notion of a common round structure
can be implemented by measuring the time elapsed since the anchor.

In the broadcast primitive of~\cite{FastAgree87} messages associated
with a certain round must be sent by correct nodes at that round and
will be received, the latest, at the end of that round by all
correct nodes. In \broadcastp, on the other hand, the rounds
progress with the arrival of the anticipated messages. Thus for
example, if a node receives some required messages before the end of
the round it may send next round's messages. The length of a round
only imposes an upper bound on the acceptance criteria. Thus the
protocol can progress at the speed of message delivery, which may be
significantly faster than that of the protocol
in~\cite{FastAgree87}.

\begin{figure*}[t]
\center \fbox{\begin{minipage}{5in} \footnotesize
\begin{alltt}
\setlength{\baselineskip}{3.5mm}


\noindent Primitive \broadcastp $(p,m,k)$\\
\mbox{\ }\hfill\textit{/* Executed per such triplet at node q. */}\\
\mbox{\ }\hfill\textit{/* Nodes send specific messages only once. */}\\
\mbox{\ }\hfill\textit{/* Nodes execute the blocks only when $\TAU$ is defined. */}\\
\mbox{\ }\hfill\textit{/* Nodes log messages until they are able to process them. */}\\
\mbox{\ }\hfill\textit{/* Multiple  messages  sent by an  individual node are ignored. */}\\

\ \tb {\bf At node } $q=p$:\hfill\textit{/* if node $q$ is node $p$ that invoked the primitive */}\\
V.\ \ \tb node $p$ sends $(init,p,m,k)$ to all nodes;\\

%\tb At any node $q$ once $\TAU_q$ is defined:\\
%dd should this be 2k-1?
W1.\ {\bf At time }$\tau_q:$  $\tau_q\le \TAU_q+2k\cdot\Phi$ \\
W2.\ \tb {\bf if} received $(init,p,m,k)$ from $p$  {\bf then}\\
W3.\ \tre send $(echo,p,m,k)$ to all;\\


%X1.\ {\bf At time }$\tau_q:$  $\tau_q\le \TAU_q+(2k-1)\cdot\Phi$\ddu \\
%dd should it be:
X1.\ {\bf At time }$\tau_q:$  $\tau_q\le \TAU_q+(2k+1)\cdot\Phi$ \\
X2.\ \tb {\bf if} received $(echo,p,m,k)$ from $\ge n-2f$
distinct nodes {\bf then}\\
X3.\ \tre send $(init^\prime,p,m,k)$ to all;\\
X4.\ \tb {\bf if} received $(echo,p,m,k)$ messages from $\ge n-f$
distinct nodes {\bf then}\\
X5.\ \tre  accept $(p,m,k)$;\\

Y1.\ {\bf At time }$\tau_q:$  $\tau_q\le \TAU_q+(2k+2)\cdot\Phi$ \\
Y2.\ \tb {\bf if} received $(init^\prime,p,m,k)$
from $\ge n-2f$ {\bf then}\\
Y3.\ \tre  $broadcasters:=broadcasters\cup\{p\}$;\\
Y4.\ \tb {\bf if} received $(init^\prime,p,m,k)$ from $\ge n-f$
distinct nodes {\bf then}\\
Y5.\ \tre  send $(echo^\prime,p,m,k)$ to all;\\

Z1.\ {\bf At any time}: \\
Z2.\ \tb {\bf if} received $(echo^\prime,p,m,k)$ from $\ge
n-2f$ distinct nodes {\bf then}\\
Z3.\ \tre  send $(echo^\prime,p,m,k)$ to all;\\
Z4.\ \tb {\bf if} received $(echo^\prime,p,m,k)$ from $\ge n-f$
distinct nodes {\bf then}\\
Z5.\ \tre  accept $(p,m,k)$;\hfill\textit{/* accept only once */}\\
\\
{\bf cleanup:}\\
\tb Remove any value or message older than $(2f+3)\cdot\Phi$
time units.
%\tb{\bf end}
\end{alltt}
\normalsize
\end{minipage} }
\caption{The \broadcastp primitive with message-driven round
structure} \label{alg:reg-bdcst}
\end{figure*}

Note that when a node invokes the primitive it evaluates all the
messages in its buffer that are relevant to the primitive. The
\broadcastp primitive is executed in the context of some initiator
$G$ that invoked \ByzAgreement, which makes use of the
\broadcastp primitive.  No correct node will execute the \broadcastp
primitive without first producing the reference (anchor),
$\TAU,$ on its local timer to the time estimate at which $G$
supposedly invoked the original agreement. By {\small IA-3A} this
happens within
$6d$
of the other correct nodes.

The synchronous Reliable Broadcast procedure of~\cite{FastAgree87}
assumes a round model in which within each phase all message
exchange among correct nodes take place. The equivalent notion of a
round in our context will be $\Phi$ defined to be: $\Phi:=
t^{\mbox{\sc\tiny G}}_{\mbox{\sc\tiny skew}}+2d.$

The \broadcastp primitive satisfies the following [TPS-*] properties
of Toueg, Perry and Srikanth~\cite{FastAgree87}, which are phrased
in our system model.
\begin{enumerate}
\item[\small\bf TPS-1] (\emph{Correctness}) If a correct node $p$
\broadcastp$(p,m,k)$ at $\tau_p$, where
$\tau_p\le\TAU_p+(2k-1)\cdot\Phi,$
on its timer, then each correct
node $q$ accepts $(p,m,k)$ at some $\tau_q$,
$\tau_{q}\le\TAU_{q}+(2k+1)\cdot\Phi,$  on its timer and
$|rt(\tau_p)-rt(\tau_q)|\le 3d.$
  \item[\small\bf TPS-2] ({\em Unforgeability}) If a correct node $p$ does not
\broadcastp$(p,m,k)$, then no correct node accepts
$(p,m,k).$
  \item[\small\bf TPS-3] ({\em Relay}) If a correct node $q_1$ accepts $(p,m,k)$
  at $\tau_1,$
$\tau_1\le\TAU_1+r\cdot\Phi$ on its timer  then any other
correct node $q_2$ accepts $(p,m,k)$ at some $\tau_2,$
$\tau_2\le\TAU_2+(r+2)\cdot\Phi,$ on its timer.
  \item[\small\bf TPS-4] ({\em Detection of broadcasters}) If a correct node
accepts $(p,m,k)$  then every correct node $q$ has $p\in
broadcasters$ at some $\tau_q,$ $\tau_q\le
\TAU_q+(2k+2)\cdot\Phi,$ on its timer. Furthermore, if a correct
node $p$ does not \broadcastp any message, then a correct node
can never have $p\in broadcasters.$
\end{enumerate}

Note that the bounds in [{\small TPS-1}] are with respect to $d$,
the bound on message transmission time among correct nodes.

When the system is stable, the \broadcastp primitive satisfies the [{\small TPS-*}] properties,
under the assumption that $n>3f.$ The proofs that appear in
\sectionref{ssec:bcast} follow closely the original proofs
of~\cite{FastAgree87}, in order to make it easier for readers that
are familiar with the original proofs.

\section{Proofs}\label{sec:apdx}\ignore{-2mm}

Note that all the definitions, theorems and lemmata in this paper
hold only from the moment, and as long as, the system is stable.

\subsection{Proof of the \initiator Properties}\label{ssec:inagree}
In the proof we distinguish between the initiation of the primitive \initiator
by the General that is done by sending (Initiator$,G,m$) to all nodes, and the
invocation of the primitive \initiator by the non-faulty nodes as a result of
receiving the above message. Notice that the General himself plays a double role;
it also invokes the primitive.

Nodes continuously run the primitive, in the sense that for each
incoming message the various ``if statements'' are tested. We say
that a node {\em executes} a line in the code when the appropriate ``if condition'' holds.
In the proofs below, we omit the reference to $(G,m)$ when it is clear from the context.
Thus, when we refer to a node executing a line it is assumed that it is with $(G,m)$ 
and that the ``if''  condition holds.

\begin{claim}\label{clm:no-sending}
If a correct General $G$ doesn't initiate \initiator in an
interval $[\bar t-\dreset, \bar t)$ then, 
\begin{enumerate}
\item
at $\bar t$
when $G$ initiates the primitive \initiator with $m$, all correct nodes 
will execute successfully Line~K1 and will send $\support$ in the interval $[\bar t,\bar t+d]$;
\item
by $\bar t+4d$ all correct nodes 
will execute Line~N4, within $2d$ of each other;
\item
at any $t'\ge \bar t$, if the correct
$G$  initiates its \initiator with value $m'$ and
$G$ did not initiate any  \initiator  in the interval $[t'-\dz,t')$ and
 $G$ did not initiate any  \initiator with $m'$ in the interval $[t'-\dv,t')$
 then
all correct nodes will execute successfully Line~K1 and will send $(support, G , m')$ 
in the interval $[t',t'+d]$, and by $ t'+4d$ all correct nodes 
will execute Line~N4, within $2d$ of each other.
\end{enumerate}
\end{claim}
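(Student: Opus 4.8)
The plan is to reduce all three parts to one structural fact: under the stated silence of $G$, the data structure of every correct node is \emph{fresh} with respect to $m$ (in the sense of the freshness definition) by local-time $\tau_q-d$ at the instant it receives $G$'s initiation message. Once freshness is available, Part~1 is immediate by reading off the four conjuncts of Line~K1, while Part~2 and the second halves of Part~3 are a forward pass through Blocks~L--N driven purely by the delivery bound $d$ and the quorum of $n-f$ correct nodes. So the single lemma to isolate and prove first is: \emph{if $G$ is silent on $[\bar t-\dreset,\bar t)$, then by $\bar t-d$ every correct node has $\broadcast[G,*]$ empty and $\lastgm=last(G)=\perp$.}

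To establish this freshness, I would first argue that no correct node sends any $(support,G,*)$ during $[\bar t-\dreset,\bar t)$. A correct node emits support only in Line~K2, i.e.\ only upon an explicit invocation triggered by an authenticated $(Initiator,G,\cdot)$ message from $G$; since senders are authenticated such a message cannot be forged, and by hypothesis $G$ issues none. The only other source is a stale $(Initiator,G,\cdot)$ lingering in a correct buffer, but stability (coherence for $\dstb=2\dreset$) together with the cleanup rule erasing messages older than $\drmv$ guarantees all pre-interval residue has decayed by $\bar t-\dreset$ (one checks $\dreset\ge\drmv$). With no correct support generated, the quorum inequality $n-2f>f$, equivalent to $n>3f$, blocks every threshold in the primitive: the weakest trigger, $n-2f$ distinct senders in Lines~L1, M1, N1, cannot be met by the at most $f$ faulty nodes alone. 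Hence, once the residue clears, no correct node creates a $\broadcast[G,*]$ entry (K2, L2) or updates $\lastgm$ or $last(G)$ (L2--N4). Combining this with the three cleanup resets---values and messages expire after $\drmv$, $last(G)$ once older than $\dz-6d$, and $\lastgm$ once older than $2\drmv+9d$---and checking that $\dreset=20d+4\drmv$ dominates each of these windows with room to spare, I obtain freshness at $\bar t-d$.

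For Parts~1 and~2, at $\bar t$ the General sends $(Initiator,G,m)$ and each correct node receives it at some $\tau_q$ with $rt(\tau_q)\in[\bar t,\bar t+d]$; freshness makes all four conjuncts of K1 hold, so every correct node executes K2 and sends $(support,G,m)$ within $[\bar t,\bar t+d]$ (Part~1). I would then push the $\ge n-f$ correct supports through the pipeline: they are received by $\bar t+2d$, enabling L3 and hence $(approve,G,m)$; the resulting $\ge n-f$ approves enable M2--M4 and hence $(ready,G,m)$ by $\bar t+3d$; and the $\ge n-f$ readys together with $\readygm$ enable N3 by $\bar t+4d$, so every correct node executes N4 by $\bar t+4d$. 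The mutual $2d$ bound follows from the echo/amplification in N1--N2 as in the standard reliable-broadcast timing argument. Part~3 runs the identical freshness-then-pipeline argument for a later initiation at $t'\ge\bar t$, except that freshness must now be recovered from the weaker gaps $\dz$ (criterion [IG1]) and $\dv$ (criterion [IG2]) rather than a full $\dreset$ silence; I would proceed by induction on successive invocations, using that the explicit reset in N4 (clearing $\broadcast[G,*]$ and ignoring $(G,m)$ messages for $3d$) plus the cleanup resets re-establish freshness because the constants are tuned exactly so that $last(G)$ clears within $\dz-6d<\dz$ of completion and $\lastgm$ within $2\drmv+9d<\dv$ of completion.

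The delicate point---the main obstacle---is the freshness step, specifically ruling out a self-sustaining cascade. Block~N is deliberately \emph{untimed}, so a priori the Byzantine nodes might try to keep $(ready,G,m)$ bouncing and thereby keep $\readygm$ and $\lastgm$ alive indefinitely. The argument that this cannot happen rests entirely on $n-2f>f$ (every amplification step still needs at least one correct participant, and correct participation ultimately traces back to an invocation that never occurred) together with the decay of all correct-originated messages. Fitting the bookkeeping of the three separate expiration windows simultaneously inside $\dreset$, $\dz$, and $\dv$ is the part that must be verified with care.
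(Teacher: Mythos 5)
Your overall architecture (establish freshness, then pipeline through Blocks~L--N, then induct for Part~3) is essentially the paper's, but the way you establish freshness has a genuine flaw. You assert that no correct node sends any $(support,G,*)$ during $[\bar t-\dreset,\bar t)$ and that ``all pre-interval residue has decayed by $\bar t-\dreset$.'' That presupposes the system was coherent, with the cleanup running, for a long stretch \emph{before} $\bar t-\dreset$. But the claim is invoked (in \lemmaref{lemma-good-reset}) with the silence interval beginning essentially at $\iota_0$, the moment the network first becomes correct, when correct nodes may still hold arbitrary residue --- including stale $(Initiator,G,\cdot)$ messages. Consequently correct nodes \emph{can} send $\support$ near the start of the interval; the paper's proof explicitly allows Block~K to fire until $\bar t-\dreset+d$ for exactly this reason. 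The decay must therefore be traced \emph{inside} the silence window, not pushed before it: writing $t=\bar t-\dreset$, residual support ends by $t+d$ (arriving by $t+2d$), Block~L is dead past $t+6d$, Block~M past $t+10d$, $\readygm$ decays by $t+10d+\drmv$ which kills Block~N, correct-originated messages are gone by $t+11d+2\drmv$, and $\lastgm$ --- whose expiration window is $2\drmv+9d$ measured from its last setting, which can be as late as $t+10d+\drmv$ --- clears only by $t+19d+4\drmv=\bar t-d$. This chain is what forces $\dreset=20d+4\drmv$; your remark that ``one checks $\dreset\ge\drmv$'' is not the binding constraint and would not suffice.

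The remaining ingredients are sound and match the paper: the $n-2f>f$ counting that prevents the faulty nodes from re-seeding any threshold once correct activity stops, the forward pipeline giving K2 within $[\bar t,\bar t+d]$, L4 by $\bar t+2d$, M2/M4 by $\bar t+3d$, N4 by $\bar t+4d$ with the $2d$ mutual bound via the N1--N2 amplification, and the induction on successive initiations for Part~3 using the $\dz-6d$ and $2\drmv+9d$ expiration windows against the gaps $\dz$ and $\dv$. Your closing paragraph shows you see the self-sustaining-cascade danger in the untimed Block~N; the fix needed is to relocate that analysis from ``before the interval'' to ``within the interval,'' using the hypothesis that the authenticated $G$ sends no initiation to guarantee that no \emph{new} correct $\support$ is generated after the initial residue.
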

\begin{proof}
Notice that $\support$ messages are sent only as a result of receiving the  initiation message from the 
General.
Recall that  $\dv=15d+2\drmv$ and $\dreset=20d+4\drmv$.
Define $t=\bar t-20d-4\drmv.$
In the proof we consider only nodes that are correct from time $t$ on.
At $t+d$ some correct nodes may still end up executing (successfully\footnote{We omit
the term ``successfully'' from now on}) Block~K and may end up sending $\support$, because of some presumably previously received messages;  but past $t+d$, by the 
code of the
primitive, no correct node would execute it any more.  The last $\support$ message resulting from that activity may reach some non-faulty node the latest by $t+2d$.  For that reason,
past $t+6d$, no correct node will execute Block~L until a new initiation message will be received by some 
correct node.

The latest $\approve$ may be sent by $t+4d$ and reach others by $t+5d.$ But past $t+10d$, no correct node 
will execute Block~M.  Notice that faulty
nodes may still influence some correct nodes to execute Block~N and it might
be that some and not all correct nodes will follow them.

By $t+10d+\drmv$ the variable $\readygm$ (for all possible values of $m$) will decay at all correct nodes
and none will execute Block~N or update $\lastgm$ anymore. By $t+10d+2\drmv+d$ no correct node
will hold in its memory any message claimed to be sent by a correct node
and all variables in all data structures, including $last_q(G)$, will decay.  The
variable $last_q(G,m)$ will decay at all correct nodes by $t+10d+2\drmv+2\drmv+9d=t+19d+4\drmv=t+\dreset-d.$ 

Therefore, if at time $\bar t$  the correct $G$ will initiate \initiator with any $m$, all
correct nodes will execute successfully Line~K1 and will send $support$
within $d$ of each other, completing the proof of the first item of the claim.

To prove the second item of the claim, notice that by $\bar t+2d$ all correct nodes will execute successfully 
Line~L4,
and by   $\bar t+3d$ all will execute successfully both lines M2 and M4.
By  $\bar t+4d$ all will execute successfully Line~N4. Let $q$ be the first correct node executing Line~N4 at 
some time $ t_1$ in this interval, following its execution of lines M4 and N3.  By $ t_1 +d$ all will execute 
Line~N2 and by $ t _1+2d$ all will execute Line~N4,  and will
set the value of $\lastgm$ and $last_q(G).$    

To prove the third item of the claim we will use a mathematical induction on the initiations of \initiator past time 
$\bar t.$ Since the correct $G$ initiates \initiator sequentially, the order of initiations is well defined. Let $i,$ $i
\ge 0$, be the index describing the order of initiations past time $\bar t.$  Case $i=0$ holds by the first two items of the claim.  

Assume that the third item holds for $i-1$ and  prove it for $i.$ 
Let $t$ be the time at which the $i-1$ initiation started.  By the induction hypothesis, and by the code of the 
primitive, by $t+4d+\dz-6d<t+\dz$ all will reset $last_q(G)$.  Therefore, by $t'$ all non-faulty have reset the value of $last_q(G)$.
If $G$ did not initiate \initiator with $m'$ after time $t'-\dv$, then by the proof of the first item of the claim, all 
will execute Block~K. The flow of the proof of the second item of the claim completes the proof.

Otherwise, let $t_0,$ $t_0\ge \bar t,$ be the last time $G$ initiated \initiator with $m'$.  By the induction 
hypothesis, by $t_0+4d$ all non-faulty nodes will execute Line~N4 and by $t_0+4d+2\drmv+9d$ all would 
have reset $last(G,m')$. Since  $t_0+4d+2\drmv+9d< t_0+\dv\le t'$, again,  all will execute Block~K, and following  the 
arguments of the first two items, the claim holds.
\end{proof}

The proof can be extended to prove the following corollary for non-faulty nodes that become correct.

\begin{corollary}\label{lcor:no-sending}
\claimref{clm:no-sending} holds for any set of at least $n-f-1$ nodes and a General that are all non-faulty  from 
time $\bar t-\dreset$ on.
\end{corollary}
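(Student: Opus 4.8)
The plan is to prove the corollary by re-reading the proof of \claimref{clm:no-sending} and checking that nowhere did the argument actually use the full strength of ``correctness''; the only property of the participating nodes that the proof exploits is that they have been non-faulty since time $t=\bar t-\dreset$. I would make this explicit by walking through the proof step by step and recording, for each decay and cleanup step, the real-time at which it completes relative to $t$, so that ``correct from time $t$ on'' can be replaced uniformly by ``non-faulty from $\bar t-\dreset$ on''.

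First I would verify the decay bookkeeping. The proof of the claim establishes that every stale message and every stale time variable held by a participating node disappears strictly before $\bar t$: $\readygm$ decays by $t+10d+\drmv$, all messages attributed to non-faulty nodes together with $last_q(G)$ decay by $t+10d+2\drmv+d$, and the slowest variable, $last_q(G,m)$, decays by $t+19d+4\drmv=\bar t-d$. Since each of these is an elapsed-time statement measured from $t=\bar t-\dreset$, a node needs only to have been non-faulty throughout $[\bar t-\dreset,\bar t)$ for its data structure to be fresh at $\bar t$ and for it to pass the freshness test of Line~K1. Thus ``correct from $t$ on'' may be replaced everywhere by ``non-faulty from $\bar t-\dreset$ on'' without changing a single inequality.

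Next I would handle the counting. The thresholds used in the primitive are $n-2f$ (Lines~L1, M1, N1) and $n-f$ (Lines~L3, M3, N3), and the argument that all participants cross them relies on enough \emph{non-faulty} senders emitting the corresponding $\support$, $\approve$, and $\ready$ messages. The claim used the $n-f$ correct nodes guaranteed by coherence, one of which is the General in its double role. Here I would observe that a set of $n-f-1$ non-faulty nodes together with the non-faulty General is again a set of $n-f$ non-faulty nodes, so every place where the claim's proof needed $n-f$ cooperating senders still has exactly $n-f$ of them, and every $n-2f$ threshold is met a fortiori. All three items then follow verbatim, including the inductive argument of the third item, since that induction only re-uses the first two items together with the same decay and threshold facts.

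The one point that deserves care---and which I expect to be the only real obstacle---is justifying that a node which is merely non-faulty from $\bar t-\dreset$ on (rather than ``correct'' in the sense of having been non-faulty for the full $\dnode$) genuinely behaves indistinguishably from a correct node over the relevant window, even though it may have recovered with arbitrary garbage in memory. This reduces to the decay computation above: because $\dreset=20d+4\drmv$ was chosen precisely so that the slowest variable, $last_q(G,m)$, expires by $\bar t-d$, such a node has by $\bar t$ purged exactly the residue that a fully-correct node would have, and is therefore fresh. Confirming that $\dreset$ dominates every decay interval appearing in the proof is the crux; once that is checked, the remainder of the corollary is an immediate transcription of the claim's proof with the quorum count held fixed at $n-f$.
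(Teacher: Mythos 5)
Your proposal is correct and matches the paper's intent: the paper offers no separate proof of this corollary, remarking only that the proof of \claimref{clm:no-sending} ``can be extended,'' and your walkthrough---checking that every decay interval used there expires by $\bar t-d$ and that $n-f-1$ non-faulty nodes plus the General restore the $n-f$ quorum---is precisely that extension made explicit. Your identification of the dominance of $\dreset$ over the slowest decay ($last_q(G,m)$ at $t+19d+4\drmv=\bar t-d$) as the crux is exactly the right observation.
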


In the proofs below we need to refer to the coherency of the system and to the
minimal time past from the time the network becomes correct. We denote by $\iota_0$ the
time by which the network becomes correct and there are at least $n-f$
non-faulty nodes that remain non-faulty from that time on.  The system is considered {\em stable} from time $\iota_1=\iota_0+2\dreset$, and as long as the system remains coherent.

In the rest of this section, in all the claims and proofs,  whenever we refer to a non-faulty node we imply a non-
faulty node that  remains non-faulty from time $\iota_0$ on. 


\begin{lemma}\label{lemma-good-reset}
Once the system is stable, at any time past time $\iota_1$, if a correct General $G$ initiates the primitive \initiator 
at some time $\hat t,$ not sooner than $\dz$ of the beginning of the previous initiation,
and not sooner than $\dv$ of the last initiation with the the same value $m$,
then within $d$ of the initiation, all correct nodes will send $(support,G,m)$. Moreover, by $\hat t+4d$ all 
correct nodes 
will execute Line~N4, within $2d$ of each other.
\end{lemma}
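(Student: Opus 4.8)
The plan is to derive this lemma from \claimref{clm:no-sending} (and \corollaryref{lcor:no-sending}) by showing that, under stability and the Sending Validity Criteria, the data structure of every correct node is \emph{fresh} with respect to $m$ at time $\hat t$, which is exactly the precondition that lets each correct node pass the test of Line~K1. Once freshness at $\hat t$ is established, the remainder of the argument --- all correct nodes sending $\support$ in $[\hat t,\hat t+d]$, and all executing Line~N4 by $\hat t+4d$ within $2d$ of one another --- is identical to the proofs of the first two items of \claimref{clm:no-sending}, since those steps use only coherence (the $\ge n-f$ correct nodes guaranteed past $\iota_1$) and the flow through Blocks~L, M and N, together with the relay through lines N1/N2 that yields the $2d$ spread. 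So the entire content of the lemma reduces to verifying the four conjuncts of Line~K1 at $\hat t$.

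First I would check the two General-level timing variables. For $last_q(G)$: by {[IG1]} the previous initiation began at some time $\le \hat t-\dz$, and by the code its last update (in Line~N4) occurs by that time $+4d$; since the cleanup resets $last_q(G)$ once it is older than $\dz-6d$, it is reset by (prev.\ init)$+4d+(\dz-6d)<(\text{prev.\ init})+\dz\le\hat t$, exactly as in the induction step of \claimref{clm:no-sending}. For $last_q(G,m)$: by {[IG2]} the last same-value initiation began at some time $\le\hat t-\dv$, its last update occurs by $+4d$, and since the cleanup resets $last_q(G,m)$ once it is older than $2\drmv+9d$, it is reset by (prev.\ $m$-init)$+4d+2\drmv+9d\le\hat t-\dv+2\drmv+13d=\hat t-2d$; hence it is already $\perp$ at $\hat t-d$, as Line~K1 requires. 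The ``no $\support$ sent in $[\hat t-d,\hat t]$'' conjunct follows because support is emitted only in response to an initiation, and the preceding initiation completed well before $\hat t-d$ by the $\dz$ spacing.

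The delicate conjunct is $\broadcast[G,m']=\perp$ for every $m'\neq m$, and this is where I would run an induction on the ordered sequence of $G$'s initiations after $\iota_1$, mirroring item~3 of \claimref{clm:no-sending}. The key point is that {[IG3]} together with the induction hypothesis guarantees that every earlier initiation (with some $m'\neq m$, or an $m$-initiation separated by $\ge\dv$) \emph{succeeded}, so every correct node executed Line~N4 for it, thereby clearing all of $\broadcast[G,*]$ and ignoring $(G,m')$ messages for the following $3d$; combined with the $\drmv$ message-decay bound and the $\dz$ spacing, no residual or late-arriving support message can repopulate $\broadcast[G,m']$ before $\hat t$. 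For the base case I would exploit that the system has been coherent for $2\dreset$ before $\iota_1$, so running the decay accounting of \claimref{clm:no-sending} from $\iota_0$ shows that all transient-fault residue associated with values $G$ is not currently pushing has decayed, giving a clean starting point. The main obstacle is precisely this bookkeeping: unlike \claimref{clm:no-sending}, whose hypothesis is outright $\dreset$-silence of $G$, here $G$ may have initiated as recently as $\dz$ before $\hat t$, so one must argue carefully that a \emph{successful} recent initiation leaves no trace in $\broadcast[G,m']$ capable of defeating Line~K1 --- which is exactly what the N4-clearing, the $3d$ ignore-window, and the $\drmv$ decay bound together ensure.
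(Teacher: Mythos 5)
Your steady-state reasoning (the reduction to freshness at $\hat t$ and the induction over $G$'s successive initiations) matches the induction step of the paper's argument, but there is a genuine gap at the base case, and the base case is where the real content of this lemma lies. Your plan for it is to ``run the decay accounting of \claimref{clm:no-sending} from $\iota_0$,'' but that accounting is only valid when $G$ is silent for $\dreset$: if $G$ initiates during $[\iota_0,\iota_0+\dreset)$, correct nodes keep sending $(support,G,m)$ and keep refreshing $\lastgm$ and $last_q(G)$, so the decay argument does not close for the values $G$ is actually pushing. Moreover, your assertion that ``[IG3] together with the induction hypothesis guarantees that every earlier initiation succeeded'' is not what [IG3] says: [IG3] only guarantees that \emph{after} a failed invocation $G$ stays silent for $\dreset$ (which returns you to the \claimref{clm:no-sending} case); it does not guarantee success. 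The proof therefore needs an explicit case split --- $G$ quiet for $\dreset$ before $\hat t$, versus some invocation in the startup window failed (forcing quiet), versus every startup-window invocation succeeded --- and only the last case is hard.

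In that last case your argument presupposes exactly what must be shown: that \emph{all correct nodes} reached Line~N4 for each early initiation, so that the N4-clearing and the $3d$ ignore-window apply everywhere. For an initiation launched amid transient garbage, a correct node may well fail the Line~K1 test and never execute Block~K, so you cannot get N4 at that node from its own invocation. The missing mechanism is this: since $G$ wipes its own memory before initiating, its certified success at lines L4, M4, N4 within $2d$, $3d$, $4d$ (which [IG3] lets you assume in this branch) proves that $n-f$ genuine $(support,G,m)$, $(approve,G,m)$ and $(ready,G,m)$ messages were actually sent in tight windows, and this traffic alone drags every non-faulty node through lines L2, M2, N2 and N4 --- without passing K1 --- by $\bar t+6d$ (note: $6d$, not the $4d$ you use in your reset arithmetic, which slightly perturbs but does not break the $\dz$ and $\dv$ spacing computations). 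Only once every non-faulty node is known to reach N4 for each startup-window initiation do the resets of $\broadcast[G,*]$, $\lastgm$ and $last_q(G)$ occur everywhere in time for freshness to hold at the first post-startup initiation, which is the base case your induction actually needs.
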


\begin{proof}
Recall that $\iota_0$ is the time by which the network became correct, as defined above.
Before $\iota_0$ every non-faulty node may have arbitrary values in the various
variables of \initiator and some of the messages being accumulated may be a result of the transient fault.

Past $\iota_0+d$ all received messages claimed to be sent by non-faulty nodes were
actually sent by non-faulty nodes. Observe that messages resulting form the
initial arbitrary state may be sent by non-faulty nodes as a result of their
initial state without actually receiving the required messages, since such
messages may be in their initial memory state.

Past $\iota_0+6d$, whenever a non-faulty nodes considers $support$ or $approve$
messages that were received within the appropriate time intervals in Block~L
and Block~M of the primitive it  considers only messages from
non-faulty nodes that were sent by non-faulty nodes as a result of executing the code of \initiator.

Past $\iota_0,$ if a non-faulty General $G$ doesn't initiate \initiator in an interval $[t,t+\dreset)$, where $t+
\dreset\le \hat t,$
by  \claimref{clm:no-sending} the lemma holds.

Now assume that the non-faulty node $G$ did initiate \initiator in the interval $[\iota_0,\iota_0+\dreset)$.
If during any such invocation (when executing \initiator as one of the participating
nodes) $G$ fails to successfully execute either Line~L4 within $2d$ of the invocation, or
Line~M4 within $3d$ of the invocation or Line~N4 within $4d$ of of the invocation,
then it will not initiate the primitive for another  $\dreset$, and by
\claimref{clm:no-sending} the lemma holds.

The only case that is left is when $G$ did initiate \initiator in the
interval $[\iota_0,\iota_0+\dreset)=[\iota_0,\iota_0+20d+4\drmv)$ and whenever it does so, it successfully  
executes Line~L4, Line~M4,
Line~N4  within $2d$, $3d$, and $4d$, respectively.   Recall that before
initiating the primitive a non-faulty General removes all past messages associated
with the primitive.  Let $\bar t> \iota_0+d$ be a time at which $G$ invoked the primitive.
Therefore,  past time $\bar t$, all messages from
non-faulty nodes that $G$ receives, while executing the primitive, were actually sent by
non-faulty nodes. By assumption, by $\bar t+2d$ $G$ executes Line~L4, therefore, by the code of the primitive, by $\bar t+3d$ 
all non-faulty nodes
would have $\broadcast[G,m]$ defined. Similarly, by $\bar t+3d$  $G$ execute Line~M4, therefore by $\bar t
+4d$ all will have
$\readygm=$`true'. Since all the $(ready,G,m)$ messages $G$ accumulates were
actually sent past $\bar t-d$ (it may receive these messages after invoking the primitive), 
all non-faulty nodes will receive at least $t+1$ of them by $\bar t+5d$,
and by $\bar t+6d$ all non-faulty nodes will successfully execute Line~N4.

Let $t'$  be the first time, past $\iota_0+20d+4\drmv,$ at which $G$, as a correct node invokes the primitive 
with some $m$, assuming it didn't do so with that specific $m$ for at least $\dv=15d+2\drmv$, and for any 
other $m'$ for at least $\dz=13d.$ Let $\bar t$ be the last time $G$ invoked the primitive with that specific $m
$.  By the arguments above, by $\bar t+6d$ all non-faulty nodes would have set \lastgm, and by $\bar t+6d
+2\drmv+9d\le t'$ all would have reset it.  For similar reasons, if $t$ was the last time prior to $t'$ at which $G$ 
invoked the primitive with any value, then by $t+6d$ all would have executed Line~N4, and by $t+6d+\dz-6d=t+\dz<t'$ would 
have reset the variable $last_q(G)$. Therefore, when each correct node receives the invocation it will 
send $\support$ within $d$ of each other and by $t'+4d$ all non-faulty nodes 
will execute Line~N4, within $2d$ of each other.

To complete the proof we use mathematical induction as was done in the proof of \claimref{clm:no-sending}.
\end{proof}

\lemmaref{lemma-good-reset} and the validity criteria of initiating the primitive \initiator imply the following. 

\begin{corollary}\label{cor:good-reset}
Once the system is stable, whenever a correct General $G$ initiates the \initiator with some value $m$, the data structures at all correct nodes is fresh.
\end{corollary}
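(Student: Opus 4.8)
The plan is to unfold the definition of \emph{fresh} and reduce the corollary to a single statement about residue: when a correct General $G$ initiates \initiator$(G,m)$ at some real-time $\hat t$, every correct node's copies of $\broadcast[G,*]$, of $\lastgm$ (for all $m$), and of $last(G)$ have already decayed to $\perp$ and remain so through the window in which the nodes receive $G$'s message. Each correct node receives $(Initiator,G,m)$ within $[\hat t,\hat t+d]$, and the freshness predicate inspects the state one $d$ earlier; hence it suffices to show that all three quantities are $\perp$ at every correct node throughout the interval $[\hat t-d,\hat t]$, and that no message of the \emph{new} invocation can repopulate them before $\hat t$. The only inputs needed are the Sending Validity Criteria IG1--IG3 together with \lemmaref{lemma-good-reset} (and \claimref{clm:no-sending}), which guarantee that every previous relevant initiation of a correct $G$ ran to completion at all correct nodes.

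I would then handle the three quantities one at a time, matching each decay threshold of the cleanup block against the separation enforced by the criteria. For $last(G)$: let $t$ be the previous initiation (IG1 gives $\hat t\ge t+\dz$); by the second item of \claimref{clm:no-sending} every correct node executes Line~N4 by $t+4d$, so its $last(G)$ stores a value $\le t+4d$, which the cleanup rule discards once it is older than $\dz-6d=7d$, i.e.\ by $t+11d$. For $\lastgm$ with the same value $m$: let $t_0$ be the previous initiation with $m$ (IG2 gives $\hat t\ge t_0+\dv$); N4 by $t_0+4d$ sets $\lastgm\le t_0+4d$, and the cleanup resets it once older than $2\drmv+9d$, i.e.\ by $t_0+13d+2\drmv$. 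For $\broadcast[G,*]$: Line~N4 of the previous invocation performs $\broadcast[G,*]:=\bot$ and the node then ignores $(G,\cdot)$ messages for $3d$, while any stray residual message decays after $\drmv$; IG3 ensures no invocation failed, so N4 is indeed reached and the vector is emptied. In each case the decay completes strictly before $\hat t-d$, precisely because $\dz=13d$, $\dv=15d+2\drmv$, and $\drmv$ were chosen to exceed the relevant threshold plus the $4d$ to reach N4 plus the extra $d$ of the freshness predicate.

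The main obstacle is exactly this last bookkeeping: \lemmaref{lemma-good-reset} only asserts that the variables are reset \emph{by} the next initiation time, whereas freshness demands they be $\perp$ a full $d$ \emph{earlier}, at $\hat t-d$ (the worst case of a node that receives $G$'s message with zero delay). Recovering this extra $d$ is why one must invoke the sharper $t+4d$ bound on reaching N4 rather than the looser $t+6d$ used inside the lemma's correctness argument: with $4d$ the reset of $last(G)$ finishes by $t+11d\le \hat t-d$ and that of $\lastgm$ by $t_0+13d+2\drmv\le \hat t-d$, each with a full $d$ in hand, whereas the $6d$ bound would push the $last(G)$ reset to $t+\dz=\hat t$ and fall exactly $d$ short. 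I would also check the dual direction --- that the new invocation's own $(support,G,m)$ traffic cannot be processed by any correct node before $\hat t$, since such messages are emitted only in response to the initiation --- so that the three quantities stay $\perp$ across all of $[\hat t-d,\hat t]$. Finally, as in \lemmaref{lemma-good-reset}, I would wrap the argument in the induction on successive initiations, so that ``the previous initiation ran to completion at all correct nodes'' is available as the induction hypothesis rather than re-derived.
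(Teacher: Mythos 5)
Your proposal is correct and follows essentially the same route as the paper, which presents this corollary as an immediate consequence of \lemmaref{lemma-good-reset} together with the Sending Validity Criteria; you are simply making explicit the decay arithmetic (the $\dz-6d$, $2\drmv+9d$, and $\drmv$ thresholds measured against the $t+4d$ bound for reaching Line~N4) that the paper leaves implicit inside the proofs of \claimref{clm:no-sending} and \lemmaref{lemma-good-reset}. Your observation that freshness requires the variables to be $\perp$ already at $\hat t-d$, and that this forces the sharper $4d$ bound on reaching Line~N4, is a legitimate and worthwhile refinement of the paper's terse justification rather than a departure from it.
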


We now prove some technical claims that cover the case of a faulty General.

\begin{claim}\label{clm:m2-spacing}
If a non-faulty node executes Line~M2 (or Line~M4) with some $(G,m)$ at some time  $t,$ for $t>\iota_0+10d$,
then no non-faulty node will execute Line~M2 (or Line~M4)  with  $(G,m)$ at any $t',$ $t'\in(t+10d,t+2\drmv)$
and in the interval $t'\in(t,t+2\drmv+10d)$ there is a sub-interval of length at least $2\drmv$ during which no 
non-faulty node  executes Line~M2 (or Line~M4)  with  $(G,m)$.
\end{claim}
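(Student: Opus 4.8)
The plan is to reduce every execution of Line~M2 or Line~M4 to an underlying ``wave'' of \support\ messages sent by non-faulty nodes, and then to exploit two facts: quorum intersection (from $n>3f$) forces any two such waves to share a non-faulty sender, while the bookkeeping variable $last_q(G,m)$ forbids a single non-faulty node from emitting \support\ twice within a short interval. Concretely, I would first prove a reduction lemma: if a non-faulty node executes Line~M2 (resp.\ Line~M4) with $(G,m)$ at real-time $t>\iota_0+10d$, then at least $n-2f$ non-faulty nodes sent \support\ during $[t-9d,t]$. This follows by tracing backwards through the code: an M2/M4 execution requires $\ge n-2f$ (resp.\ $\ge n-f$) genuine \approve\ messages received in a $5d$ (resp.\ $3d$) window, of which at least one is from a non-faulty node; that node executed Line~L4 only after receiving $\ge n-f$ \support\ messages in a $2d$ window, of which at least $n-2f$ are from non-faulty nodes and were therefore sent within $[t-9d,t]$. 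The hypothesis $t>\iota_0+10d$ is exactly what guarantees, by the remarks preceding the lemmas, that all of these traced-back messages are genuine and not transient residue.

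Next I would establish a re-send spacing lemma: a non-faulty node that sends \support\ at two distinct times $a_1<a_2$ must satisfy $a_2-a_1>2\drmv+10d$. The argument tracks $last_q(G,m)$: sending at $a_1$ sets it to $a_1$ in Line~K2, and Line~K1 can fire again at $a_2$ only if $last_q(G,m)=\perp$ at $a_2-d$; since the cleanup resets this variable to $\perp$ only once its recorded value is older than $2\drmv+9d$, and since every interim execution of Lines~L2, L4, M2, M4, N2, N4 only refreshes it forward, the reset cannot occur before $a_1+2\drmv+9d$, forcing $a_2-d\ge a_1+2\drmv+9d$. Combining the two lemmas with quorum intersection yields the key dichotomy: for any two M2/M4 executions at $t_1<t_2$, since $2(n-2f)>n-f$ their two \support-waves must share a non-faulty node $w$; if $t_2-t_1>9d$ the windows $[t_1-9d,t_1]$ and $[t_2-9d,t_2]$ are disjoint, so $w$ sent \support\ at two distinct times and the spacing lemma gives $t_2\ge a_2>a_1+2\drmv+10d\ge (t_1-9d)+2\drmv+10d=t_1+2\drmv+d$. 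Hence every pair of executions is either within $9d$ or more than $2\drmv+d$ apart, with nothing in the forbidden zone $(9d,2\drmv+d]$.

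From this dichotomy both parts follow. For the first part, an execution at $t'\in(t+10d,t+2\drmv)$ would give $t'-t\in(10d,2\drmv)$, which lies in the forbidden zone, a contradiction. For the second part I would organize all M2/M4 executions for $(G,m)$ into clusters by cutting the sorted execution times at every gap exceeding $2\drmv+d$. An easy induction using $\drmv\gg d$ (so $18d<2\drmv+d$) shows that each cluster has total width at most $9d$: if $e_{k-1}-e_1\le 9d$ and $e_k-e_{k-1}\le 9d$ then $e_k-e_1\le 18d<2\drmv+d$, which by the dichotomy forces $e_k-e_1\le 9d$. Consecutive clusters are more than $2\drmv+d$ apart by construction. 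Letting $t_{\max}\le t+9d$ be the last execution in the cluster containing $t$, the interval $(t_{\max},t_{\max}+2\drmv)$ is execution-free, and since $t_{\max}+2\drmv\le t+2\drmv+9d<t+2\drmv+10d$ it lies inside $(t,t+2\drmv+10d)$, giving the required sub-interval of length $2\drmv$.

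I expect the main obstacle to be the re-send spacing lemma, which is precisely where the carefully chosen decay constant $2\drmv+9d$ for $last_q(G,m)$ does its work: one must argue that the interim refreshes of $last_q(G,m)$, and the possibility that a faulty General keeps re-initiating, can only delay and never accelerate the reset to $\perp$. A secondary delicate point is justifying genuineness of all traced-back \approve\ and \support\ messages, which is why the hypothesis is $t>\iota_0+10d$ rather than merely $t>\iota_0$; the depth of the $5d$ reception window for \approve\ in Line~M1 is what pushes the genuineness threshold past $\iota_0+6d$ out to the stated $\iota_0+10d$.
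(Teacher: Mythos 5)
Your proof is correct in substance but follows a genuinely different route from the paper's. The paper argues a \emph{global lockout}: a single execution of Line~M2/M4 at time $t$ forces \emph{every} non-faulty node to execute Line~L2 and set $\lastgm$ within $[t-9d,t+d]$ (because the $n-2f$ non-faulty-sent $\support$ messages behind the triggering Line~L4 arrive everywhere within a $4d$ window satisfying Line~L1), so no non-faulty node whatsoever can send $\support$ again before $t-9d+(2\drmv+9d)=t+2\drmv$; the absence of fresh $\support$ then kills Lines L4, M2 and M4 past $t+10d$. You instead prove a \emph{pairwise} statement: any two M2/M4 executions more than $9d$ apart trace back to two quorums of $\support$ senders that must share a non-faulty node, and your re-send spacing lemma (which exploits the $2\drmv+9d$ decay of $\lastgm$ exactly as the paper does) then separates the two executions by at least $2\drmv+d$. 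Your dichotomy-plus-clustering yields both parts of the claim, indeed with the slightly stronger forbidden zone $(t+9d,t+2\drmv+d)$. The one step you must repair is the intersection count: you write $2(n-2f)>n-f$, implicitly assuming the non-faulty population is exactly $n-f$; since the system may have up to $n$ non-faulty nodes, two sets of $n-2f$ non-faulty senders need not intersect when $n\le 4f$. Intersect instead the full quorums of $n-f$ distinct $\support$ senders received by the two L4-executing nodes: they share at least $2(n-f)-n=n-2f\ge f+1$ nodes, hence at least one non-faulty one, and the rest of your argument goes through unchanged. Note finally that the paper's global-lockout formulation is reused downstream (e.g., in \lemmaref{lemma-faulty-reset}) in the stronger form that \emph{no} non-faulty node sends $\support$ during the quiet window, not merely the participants of the two waves, so your pairwise version, while sufficient here, would need the same strengthening there.
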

\begin{proof}
A non-faulty node that executed Line~M2 (or Line~M4)  with  $(G,m)$  at time $t$ has considered only 
messages sent past $\iota_0+d$ and noticed at least one message from a
non-faulty node, say $q$, that has sent $(approve,G,m)$ at some time in the interval
$[t-6d,t]$. The non-faulty node $q$ sent the $(approve,G,m)$ message as a result of
executing Line~L4  at some time $t'$ in the above interval. Since $q$ have received
$n-f$ $(support,G,m)$ messages in the interval $[t'-2d,t']$, every non-faulty node
should have noticed at least $t+1$ of these in some interval $[t'-3d,t'+d]$ and
would have executed Line~L2 in that interval.  This implies that all non-faulty nodes
have set $\lastgm$ at some time in the interval $[t-9d,t+d]$. By the protocol,
no non-faulty node will send any $(support,G,m)$ later than $t+2d$ (it allows for recent messages which 
causes it to send its support a $d$ later) until it will reset $\lastgm$, which takes $2\drmv+9d$ time. The 
earliest this will happen to any
non-faulty node is $t-9d+2\drmv+9d=t+2\drmv$.

Since no non-faulty node will send $(support,G,m)$ later than $t+2d$, no non-faulty
node will execute Line~L4 later than $t+2d+2d=t+4d$, and its message may be
received by non-faulty nodes by $t+5d.$ Therefore, Line~M2 (or M4) may still
be executed as a result of such a message as late as $t+10d$.  This implies
that no non-faulty node will execute Line~M2 (or M4) in the interval $(t+10d,t+2\drmv]$. Note that by definition  $2\drmv>10d$.

Observe that the above arguments imply that if $\bar t$ is the latest time in the
interval $[t,t+10d]$ at which a non-faulty node executes Line~M2, then
no non-faulty node will execute Line~M2 or Line~M4 earlier than $\bar t+2\drmv,$ since each non-faulty node gas set $\lastgm$ at  $\bar t - 9d$ or later.
\end{proof}

\begin{corollary}\label{cor:2m2}
If two non-faulty nodes execute  Line~M2  with  $(G,m)$ at some times  $t_1, t_2,$ respectively, for $t_1,t_2 >
\iota_0+10d$,
then either $|t_1-t_2|\le 9d$ or $|t_1-t_2|> 2\drmv.$
\end{corollary}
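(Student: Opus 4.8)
The plan is to read the statement off Claim~\ref{clm:m2-spacing} after reducing to the ordered case. Assume without loss of generality that $t_1\le t_2$, and recall that both times exceed $\iota_0+10d$, so the claim applies to the execution at $t_1$. The claim asserts that no non-faulty node executes Line~M2 with $(G,m)$ at any time in the open interval $(t_1+10d,\,t_1+2\drmv)$. Since the second execution happens at $t_2\ge t_1$, this immediately forces the dichotomy $t_2\le t_1+10d$ or $t_2\ge t_1+2\drmv$, i.e.\ $|t_1-t_2|\le 10d$ or $|t_1-t_2|\ge 2\drmv$. The remaining work is to sharpen the first bound from $10d$ to $9d$ and to make the second inequality strict.

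For the far case I would reuse the $\lastgm$ bookkeeping already carried out in the proof of Claim~\ref{clm:m2-spacing}. The execution at $t_1$ forces every non-faulty node to have set $\lastgm$ at some point of $[t_1-9d,\,t_1+d]$, and such a node cannot emit a fresh $\support$ message before it resets $\lastgm$, which the cleanup rule delays by $2\drmv+9d$. Hence the earliest any non-faulty node can contribute support to a new approve wave is $t_1+2\drmv$, and the $\support\to\approve$ relay chain needed to reach a subsequent M2 pushes the next execution strictly past $t_1+2\drmv$; this yields $|t_1-t_2|>2\drmv$.

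For the near case I would argue that the two executions must be driven by one and the same wave of $\support$ messages. Tracing M2 at $t_2$ backwards through Lines~M1, L4 and L3 (as in the claim's proof) shows it rests on $\support$ messages sent by at least $n-2f\ge f+1$ non-faulty nodes in $[t_2-9d,\,t_2]$. If $t_2>t_1+9d$ these messages would be sent strictly after $t_1$, yet the same non-faulty nodes have already set $\lastgm$ on account of the wave at $t_1$ and are therefore barred from re-sending $\support$ until after their reset at $\ge t_1+2\drmv$; since $t_2\le t_1+10d<t_1+2\drmv$, this is impossible, so $t_2\le t_1+9d$.

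The delicate point—and the step I expect to be the main obstacle—is the constant chasing that separates the $9d$ of the corollary from the $10d$ that the claim supplies verbatim: one has to align the window $[t_1-9d,\,t_1+d]$ during which non-faulty nodes set $\lastgm$ with the window $[t_2-9d,\,t_2]$ during which the wave-2 $\support$ messages are sent, and confirm that no non-faulty support-sender can lie in both without an intervening reset. Making this overlap argument airtight, in particular handling the $+d$ slack at the upper end of the setting window and verifying that the approvers invoked for the two executions cannot manufacture the needed support purely from residual or Byzantine messages, is where the care is required; everything else is a direct quotation of Claim~\ref{clm:m2-spacing}.
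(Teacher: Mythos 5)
The paper offers no proof of this corollary at all: it is stated as an immediate consequence of \claimref{clm:m2-spacing}, so your decision to read it off that claim is exactly the intended route, and your observation that the claim verbatim only yields $|t_1-t_2|\le 10d$ or $\ge 2\drmv$ (rather than $\le 9d$ or $>2\drmv$) is a sharp and correct catch. Your far-case argument --- every non-faulty node sets $\lastgm$ in $[t_1-9d,t_1+d]$, cannot produce fresh $support$ until the reset at $\ge t_1-9d+(2\drmv+9d)=t_1+2\drmv$, and the $support\to approve\to$ M2 relay adds strictly positive delay --- is sound and faithfully reuses the internals of the claim's proof, which indeed closes the exclusion interval at $t+2\drmv$ on the right.

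The near case, however, is where your proof actually fails, and you have only flagged the obstacle rather than overcome it. Your argument is: the wave-2 $support$ messages are sent in $[t_2-9d,t_2]$, so if $t_2>t_1+9d$ they are sent ``strictly after $t_1$'' and hence by nodes already barred by $\lastgm$. But being sent after $t_1$ does not bar a node: Line~K1 tests $\lastgm$ at $\tau_q-d$, and nodes set $\lastgm$ as late as $t_1+d$, so a non-faulty node may legitimately send $support$ as late as $t_1+2d$ (the claim's proof says exactly this: ``no non-faulty node will send any $(support,G,m)$ later than $t+2d$''). Redoing your constant chase with the correct cutoff $t_1+2d$ forces $t'_2-3d\le t_1+2d$ with $t'_2\ge t_2-6d$, i.e.\ $t_2\le t_1+11d$ --- weaker than both the claim's $10d$ and the corollary's $9d$. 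So the overlap argument you defer to (``aligning the two windows'') is not a matter of care but the entire missing content, and as sketched it proves the wrong constant. Note that the $9d$ is not cosmetic: \lemmaref{lemma-faulty-reset} applies this corollary precisely at the threshold $|t'-t_1|>9d$, so either the corollary needs a genuinely tighter argument than the one you (or the claim) supply, or the paper's constant is off by $d$; your proposal resolves neither.
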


\begin{claim}\label{clm:m4-cluster}
If a non-faulty node executes  Line~M4 with  $(G,m)$  at some time  $t,$ for $t>\iota_0+10d$,
then no non-faulty node will execute Line~M4 in the interval $[t+8d,t+2\drmv+5d]$; and in the
interval $[t,t+2\drmv+6d]$, there is sub-interval of length $2\drmv$ during which no
non-faulty node executes either Line~M2 or Line~M4 with   $(G,m)$ .
\end{claim}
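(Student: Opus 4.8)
The plan is to follow the skeleton of the proof of \claimref{clm:m2-spacing}, but to exploit the fact that Line~M4 fires under the strictly stronger condition of receiving $n-f$ copies of $\approve$ inside a $3d$ window (rather than $n-2f$ inside a $5d$ window, as for Line~M2); this extra strength is exactly what sharpens the intervals. First I would record the basic extraction. The node that executes Line~M4 at $t$ has received $n-f$ messages $\approve$ in $[t-3d,t]$, of which at least $n-2f\ge f+1$ were sent by distinct non-faulty nodes; call this set $P$. Every message of $P$ was sent no later than $t$, hence received by every non-faulty node by $t+d$, so every non-faulty node executes Line~M2 at some time in $[t-4d,t+d]$ and in particular updates $\lastgm$ at some time $\ge t-4d$. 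Moreover each node of $P$ executed Line~L4 by $t$, so its data structure is locked ($\lastgm\neq\bot$) by $t$; since $|P|\ge n-2f$, at most $f$ non-faulty nodes can still pass the freshness guard of Line~K1 and emit a fresh $\support$ after $t$.

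For the first assertion (no Line~M4 in $[t+8d,t+2\drmv+5d]$) I treat the endpoints separately. The left endpoint reuses the confinement already proved in \claimref{clm:m2-spacing}: no non-faulty node sends $\support$ after $t+2d$, hence none executes Line~L4 after $t+4d$, so every non-faulty $\approve$ of this invocation is received by $t+5d$; since Line~M4 needs $n-2f$ non-faulty $\approve$'s inside a $3d$ window, it cannot fire after $t+8d$. The right endpoint is where Line~M4's strength enters: every non-faulty node updated $\lastgm$ at a time $\ge t-4d$ (via Line~M2), and $\lastgm$ is reset only $(2\drmv+9d)$ after its last update, so no non-faulty node resets before $t-4d+(2\drmv+9d)=t+2\drmv+5d$; while $\lastgm\neq\bot$ the guard of Line~K1 forbids a fresh $\support$, so no fresh $\approve$ and hence no Line~M4 can occur before $t+2\drmv+5d$. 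Combining the two bounds closes the interval.

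For the second assertion I would mimic the last paragraph of the proof of \claimref{clm:m2-spacing}. Let $\bar t$ be the latest time at which any non-faulty node executes Line~M2 in the cluster seeded at $t$; note $t\le\bar t$ (the node firing Line~M4 at $t$ also satisfies the Line~M2 guard at $t$), and $\bar t$ also bounds the last Line~M4 from above, since the Line~M4 guard implies the Line~M2 guard. Tracing the Line~M2 at $\bar t$ back through a $2d$-cluster of $n-2f$ non-faulty $\support$'s shows, as there, that every non-faulty node had updated $\lastgm$ no earlier than $\bar t-9d$, so no non-faulty node resets before $\bar t+2\drmv$; hence no non-faulty node executes Line~M2 or Line~M4 in $(\bar t,\bar t+2\drmv)$, a window of the required length $2\drmv$. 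It then remains to place this window inside $[t,t+2\drmv+6d]$, i.e. to show $\bar t\le t+6d$.

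The bound $\bar t\le t+6d$ is the main obstacle, and this is where the ``at most $f$ leftover supporters'' observation must be turned into confinement of the $\approve$ traffic. The intended mechanism is that, because the $\ge n-2f$ nodes of $P$ have locked their data structures by $t$, no $2d$-window after $t$ can hold $n-2f$ fresh non-faulty $\support$'s; therefore no non-faulty node can manufacture a fresh Line~L4 after $t$, every non-faulty $\approve$ relevant to this invocation is received by $t+d$, and since Line~M2 reads $\approve$'s only through a $5d$ window it cannot fire past $(t+d)+5d=t+6d$. Making this airtight is the delicate part: one must rule out late Line~L4 executions that recycle the early $\support$-cluster under the sliding $2d$ window, tracking receipt times carefully, and the whole gain of $4d$ over \claimref{clm:m2-spacing} traces back to confining the last non-faulty $\approve$-receipt to $t+d$ instead of $t+5d$. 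As a consistency check on the placement of the gap I would also use $2\drmv>10d$, so that \corollaryref{cor:2m2} forces the entire Line~M2 cluster to have pairwise spread at most $9d$.
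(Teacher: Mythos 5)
Your treatment of the first assertion is essentially the paper's own argument and is fine: the left endpoint $t+8d$ comes from the support/L4/approve confinement already established in \claimref{clm:m2-spacing}, and the right endpoint from the fact that every non-faulty node's last setting of $\lastgm$ is at some time $\ge t-4d$, so no reset (hence no fresh $\support$) before $t+2\drmv+5d$.

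The second assertion is where you have a genuine gap, and it sits exactly where you flagged it: the claim that $\bar t\le t+6d$ because ``no non-faulty node can manufacture a fresh Line~L4 after $t$'' and ``every non-faulty $\approve$ relevant to this invocation is received by $t+d$.'' This is false. That the nodes of $P$ have \emph{locked} their data structures by $t$ does not confine when their already-sent $\support$ messages are \emph{received} elsewhere, nor does it stop the up-to-$f$ remaining non-faulty nodes from emitting supports late: the guard of Line~K1 inspects $\lastgm$ at $\tau_q-d$, so a node that locks at $t+d$ can still send $\support$ just before $t+2d$, received just before $t+3d$. Meanwhile the original cluster's supports may have been sent as late as $t$ (an approve counted by the M4 at $t$ can come from an L4 firing at $t$ itself) and hence delivered to a slow node as late as $t+d$. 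A $2d$ receive-window ending at $t+3d$ can then contain all $n-2f$ old non-faulty supports together with the $f$ stragglers — $n-f$ distinct non-faulty supports — so Line~L4 can fire at $t+3d$, its $\approve$ is received by $t+4d$, and the $5d$ window of Line~M1 then admits an M2 as late as $t+9d$; indeed the paper's own \claimref{clm:m2-spacing} already concedes L4 up to $t+4d$, approve receipt up to $t+5d$, and M2 up to $t+10d$. So the last M2 of the cluster is not bounded by $t+6d$, and your $2\drmv$-gap $(\bar t,\bar t+2\drmv)$ is not placed inside $[t,t+2\drmv+6d]$; even invoking \corollaryref{cor:2m2} only caps the M2 cluster's spread at $9d$, which is still too large. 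The paper avoids anchoring at the last M2 altogether: it takes $\bar t$ to be the latest \emph{M4} in $[t-4d,t+8d]$ and argues that the very messages triggering that M4 force every non-faulty node to have (re)set $\lastgm$ no earlier than $\bar t-4d$, postponing every reset to $\bar t+2\drmv+5d$ and hence postponing any fresh M2/M4 by that much. To repair your proof you would need either to adopt that anchoring or to supply a genuinely new argument bounding the trailing M2 executions, which your current mechanism does not do.
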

\begin{proof}
A non-faulty node that executed  Line~M4  with  $(G,m)$  at time $t$ has considered only messages sent past 
$\iota_0+d$ and noticed at least $t+1$ message from 
non-faulty nodes that were sent in the interval
$[t-4d,t]$. Each such message is a result of receiving $\support$ messages that may have been sent as early 
as $t-7d.$ Thus, all these are based on actual messages being sent past $\iota_0+d.$

Let $t$ be a time
at which a non-faulty node execute Line~M4 with  $(G,m)$. By $t+d$ all  non-faulty nodes will
set $\lastgm.$ Past $t+2d$ and until its $\lastgm$ is reset no non-faulty node
will send $\support$.  Therefore, no non-faulty node will send $\approve$ past $t+2d+2d$,
and none will execute Line~M4 past $t+4d+d+3d$ and until its $\lastgm$ is reset.
Since a non-faulty node executed Line~M4 at time $t$, the set of messages causing
it to execute Line~M4 should cause all other non-faulty node to execute Line~M2 at
some time past $t-4d.$ Thus, this is the earliest time at which some non-faulty node
may have set $\lastgm$ and will not set it later. Therefore, no non-faulty node will
execute Line~M4 in the interval  $[t+8d,t+2\drmv+5d]$.

Observe that the above arguments imply that if $\bar t$ is the latest time the
interval $[t-4d,t+8d]$ at which a non-faulty node executes Line~M4, then
no non-faulty node will execute Line~M2 or Line~M4 with   $(G,m)$  earlier than $\bar t+2\drmv+5d.$
\end{proof}

\claimref{clm:m4-cluster} implies the following.

\begin{corollary}\label{cor:2m4}
If two non-faulty nodes execute  Line~M4 with  $(G,m)$ at some time  $t_1, t_2,$ respectively, for $t_1,t_2 >
\iota_0+10d$,
then either $|t_1-t_2|\le 7d$ or $|t_1-t_2|> 2\drmv.$
\end{corollary}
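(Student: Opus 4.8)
The plan is to obtain \corollaryref{cor:2m4} as a direct consequence of \claimref{clm:m4-cluster}, which already carries all the message-level timing analysis; the corollary merely repackages the claim's forbidden-window statement as a dichotomy on $|t_1-t_2|$. First I would use the symmetry of the statement to assume without loss of generality that $t_1\le t_2$, and note that since both times exceed $\iota_0+10d$, \claimref{clm:m4-cluster} is applicable at the earlier time $t_1$.

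Applying the claim at $t_1$ shows that no non-faulty node executes Line~M4 with $(G,m)$ during the window $[t_1+8d,\,t_1+2\drmv+5d]$. As $t_2$ is itself a time at which a non-faulty node executes Line~M4, it must avoid this window, so (using $t_2\ge t_1$) one of two cases holds: either $t_2<t_1+8d$ or $t_2>t_1+2\drmv+5d$. In the latter case $|t_1-t_2|=t_2-t_1>2\drmv+5d>2\drmv$, which is exactly the ``far'' alternative of the corollary, and nothing further is required.

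It then remains to handle the ``near'' case $t_1\le t_2<t_1+8d$ and to sharpen the $8d$ width of the forbidden window down to the claimed $7d$. Here I would not treat \claimref{clm:m4-cluster} as a black box but instead reuse the finer bounds produced inside its proof: the Line~M4 at $t_1$ forces every non-faulty node to set $\lastgm$ by $t_1+d$, after which no non-faulty node sends $\support$ past $t_1+2d$ and hence none sends $\approve$ past $t_1+4d$; consequently every $\approve$ that can drive a later Line~M4 has been delivered to all non-faulty nodes by $t_1+5d$. Since the Line~M4 at $t_2$ requires $n-f$ such $\approve$ messages to fall inside its receive window $[t_2-3d,t_2]$, that window must still overlap messages already delivered, and keeping track of exactly where the single $d$ of worst-case delay-plus-drift (already folded into the definition of $d$) is charged is what closes the gap to $|t_1-t_2|\le 7d$.

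I expect this last step --- squeezing out the final $d$ --- to be the only real obstacle: the bare statement of \claimref{clm:m4-cluster} only excludes Line~M4 from $t_1+8d$ onward, so the strictly tighter $7d$ bound cannot be read off the claim's statement and instead forces a short return to its message-timing argument. The derivation is entirely parallel to that of \corollaryref{cor:2m2} from \claimref{clm:m2-spacing}, where the same one-$d$ tightening (there, from a $10d$ window down to $9d$) occurs, so I would present the two corollaries in a uniform style.
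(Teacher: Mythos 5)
Your overall route is the same as the paper's: the paper offers no separate proof of \corollaryref{cor:2m4} at all, stating only that ``\claimref{clm:m4-cluster} implies the following,'' and your first two paragraphs carry out exactly that intended one-line derivation. The ``far'' branch is handled correctly: if $t_2\ge t_1+8d$ then the forbidden window $[t_1+8d,\,t_1+2\drmv+5d]$ forces $t_2>t_1+2\drmv+5d>t_1+2\drmv$, which is the second alternative. You also correctly spot something the paper silently glosses over: the claim's window only yields $|t_1-t_2|<8d$ in the near case, not the corollary's $|t_1-t_2|\le 7d$ (the same $d$-sized mismatch occurs between \claimref{clm:m2-spacing}, whose excluded interval opens at $t+10d$, and the $9d$ of \corollaryref{cor:2m2}).

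The genuine gap is that your third paragraph does not actually close that mismatch. Run your own numbers: the Line~M4 at $t_1$ gives $\lastgm$ set everywhere by $t_1+d$, no $\support$ past $t_1+2d$, hence (by the bound used inside \claimref{clm:m4-cluster}) no $\approve$ sent past $t_1+4d$ and all such $\approve$ messages delivered by $t_1+5d$; requiring the window $[t_2-3d,t_2]$ to contain at least one of them gives $t_2-3d\le t_1+5d$, i.e.\ $t_2\le t_1+8d$ --- which is precisely where the claim's own proof stops (``none will execute Line~M4 past $t+4d+d+3d$''). There is no single misplaced $d$ of slack left to recover in that chain, so the promised tightening to $7d$ is not a bookkeeping detail but a missing argument; as written, your proof establishes ``either $|t_1-t_2|\le 8d$ or $|t_1-t_2|>2\drmv$.'' To be fair to you, the paper itself never supplies the missing $d$ either, and every downstream use of \corollaryref{cor:2m4} (e.g.\ in \lemmaref{lemma-faulty-reset}) only needs the dichotomy with some constant well below $2\drmv$, so the honest fixes are either to weaken the corollary's first alternative to $8d$ (and \corollaryref{cor:2m2}'s to $10d$) or to give a genuinely sharper timing argument than the one in \claimref{clm:m4-cluster}; your proposal does neither, it only names the obstacle.
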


\begin{claim}\label{clm:no-m2}
If no non-faulty node executes Line~M2 (or Line~M4) with $(G,m)$ in an interval $(t,t+2\drmv],$
for $t>\iota_0+\drmv$, then no non-faulty node will execute Line~N2 or Line~N4 with   $(G,m)$  in the interval 
$[t+\drmv, t'']$, where $t+2\drmv<t''$ and some non-faulty node executes Line~M4 with   $(G,m)$  
at time $t''$.
\end{claim}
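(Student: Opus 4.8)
The plan is to control the two preconditions that Line~N2 and Line~N4 share: a non-faulty node executes either line only if (i) $\readygm$ holds and (ii) it has received $\ready$ from $\ge n-2f$ distinct nodes. I will show that, throughout $[t+\drmv,t'')$, neither precondition can be met, and then dispatch the closed endpoint $t''$ separately. Throughout I use that $t>\iota_0+\drmv$, so every $\ready$ message a non-faulty node holds during this window and attributes to a non-faulty sender was genuinely sent by that node after $\iota_0$ (any residue of the initial transient state, being older than $\drmv$, has already decayed). I also use two decay facts already established for the primitive: $\readygm$ is set only in Line~M2 and is reset $\drmv$ time after being set (as in the proof of \lemmaref{lemma-good-reset}), and a logged $\ready$ message is discarded once it is older than $\drmv$.

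The heart of the argument is the sub-claim that \emph{no non-faulty node sends $\ready$ during $[t+\drmv,t'')$}. A non-faulty node emits $\ready$ only in Line~M4 or Line~N2. There is no Line~M4 execution in $[t+\drmv,t'')$: none occurs in $(t,t+2\drmv]$ by hypothesis, and $t''$ is the first Line~M4 execution past $t+2\drmv$, so $(t+2\drmv,t'')$ is also free of it. For the Line~N2 case I take the earliest such send, say at time $\sigma\in[t+\drmv,t'')$ by a non-faulty node $r$. Executing Line~N2 at $\sigma$ requires $\readygm$ to hold at $r$, hence a Line~M2 execution at some $\mu\in(\sigma-\drmv,\sigma]$; since $\mu>\sigma-\drmv\ge t$ and no Line~M2 fires in $(t,t+2\drmv]$, we get $\mu>t+2\drmv$ and therefore $\sigma>t+2\drmv$. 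Line~N2 at $\sigma$ also requires $\ge n-2f>f$ distinct $\ready$ senders, hence at least one non-faulty sender whose message is valid at $\sigma$ and was thus sent at some $\sigma'\in(\sigma-\drmv,\sigma)$. Because $\sigma>t+2\drmv$, we have $\sigma'>\sigma-\drmv>t+\drmv$, so $\sigma'\in[t+\drmv,t'')$ with $\sigma'<\sigma$, contradicting the minimality of $\sigma$. This proves the sub-claim.

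With the sub-claim in hand, suppose a non-faulty node $p$ executed Line~N2 or Line~N4 at some $x\in[t+\drmv,t'')$. The $\readygm$ precondition forces a Line~M2 execution in $(x-\drmv,x]$ past $t+2\drmv$, so $x>t+2\drmv$; the threshold precondition supplies a valid non-faulty $\ready$ sent at some $\sigma\in(x-\drmv,x]$ with $\sigma>x-\drmv>t+\drmv$, i.e. $\sigma\in[t+\drmv,t'')$, contradicting the sub-claim. Hence no non-faulty node executes Line~N2 or Line~N4 on the open interval $[t+\drmv,t'')$. For the endpoint $x=t''$, note that $t''-\drmv>t+\drmv$, so by the sub-claim (and decay of earlier messages) the only valid non-faulty $\ready$ messages present at $t''$ are those emitted by the Line~M4 execution(s) occurring exactly at $t''$; being sent at $t''$, they cannot yet have been delivered and counted toward the $n-2f$ threshold by any node acting at $t''$, which rules out Line~N2 and Line~N4 at $t''$ as well.

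The main obstacle is the self-referential nature of Line~N2, which both consumes and produces $\ready$ messages and could in principle sustain a $\ready$ chain through the quiet window with no fresh Line~M4. The device that breaks this is to combine the two preconditions: the $\readygm$ requirement pushes any offending Line~N2 strictly past $t+2\drmv$, which in turn forces its supporting non-faulty $\ready$ to have been sent inside $[t+\drmv,t'')$, yielding a descent contradiction against a single minimal send rather than an unbounded regress. A secondary subtlety is the closed endpoint $t''$, where one must argue that the $\ready$ messages produced by the anchoring Line~M4 at $t''$ are not yet available to trigger a Line~N2 or Line~N4 at that same instant.
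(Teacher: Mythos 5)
Your proof is correct and rests on the same two mechanisms as the paper's own argument: the decay of $\readygm$ (set only in Line~M2) pushes any Line~N2/N4 execution past $t+2\drmv$, and the $n-2f$ threshold forces such an execution to consume a $\ready$ message from a non-faulty node, which can only originate from Line~M4 (absent in the window) or an earlier Line~N2, so the chain must bottom out. The paper runs this forward in time (no new $\ready$ is produced after $t+\drmv$, the old ones decay by $t+2\drmv$), whereas you organize it as a minimal-counterexample descent and handle the closed endpoint $t''$ explicitly; this is the same argument up to bookkeeping, and both versions share the paper's habit of ignoring the $\pm d$ message-transit slack in the decay windows.
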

\begin{proof}
Because $t>\iota_0+\drmv$, all non-faulty nodes have decayed
all messages that appeared as part of the initial state that may have not been actually sent.  Since we assume
that no non-faulty node executes Line~M2 with  $(G,m)$  in the interval $(t,t+2\drmv],$
by $t+\drmv$ all will have reset $\readygm$ and will not execute Line~N2
or Line~N4 any more, so no non-faulty node will send a new $ready$ message.
By $t+2\drmv$, all will decay  all previous
$(ready,G,m)$ messages that were sent by non-faulty nodes. From
that time on, even if some non-faulty nodes will execute Line~M2,
none will be able to execute Line~N2 until a new $(ready,G,m)$
message is produced by a non-faulty node, thus until some non-faulty node execute Line~M4 with   $(G,m)$.
\end{proof}

The proof makes use of the following simple observation.

\begin{claim}\label{clm:before-support}
At any time $t$, $t>\iota+\drmv,$ if a non-faulty node sets $\broadcast[G,m]$, then some non-faulty node has sent $(support,G,m)$ later than $rt(\broadcast[G,m]).$
\end{claim}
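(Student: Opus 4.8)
The plan is to exploit the fact that, in the code of \figureref{alg:init-agree}, a non-faulty node writes a genuine (non-$\perp$) local time into $\broadcast[G,m]$ in exactly two places, Line~K2 and Line~L2 (Line~N4 only resets the entry to $\perp$), and to show that each such write is ``backed'' by a real support-sending of a correct node that is strictly later than $rt(\broadcast[G,m])$. First I would dispose of Line~K2: there the node sets $\broadcast[G,m]:=\tau_q-d$ and, in the same block, itself sends \support at its current local time $\tau_q$. Since the node is non-faulty its clock advances, so $rt(\tau_q)>rt(\tau_q-d)=rt(\broadcast[G,m])$, and the node itself is the required witness.

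The substantive case is Line~L2, where $\broadcast[G,m]$ is set from a max with the candidate $\tau_q-\alpha-2d$, triggered by receipt of \support from $\ge n-2f$ distinct nodes inside the window $[\tau_q-\alpha,\tau_q]$ with $\alpha\le 4d$. Because $n>3f$ we have $n-2f\ge f+1$, so at least one of these senders, call it $p$, is non-faulty. This is where the hypothesis $t>\iota_0+\drmv$ enters: exactly as in the proof of \claimref{clm:no-m2}, by that time all initial-state residue has decayed, so the message attributed to $p$ is one that $p$ actually sent, received no earlier than local time $\tau_q-\alpha$, i.e.\ at real time $\ge rt(\tau_q-\alpha)$. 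The key inequality is then a drift computation: $p$ sent at real time $t_s\ge rt(\tau_q-\alpha)-(\delta+\pi)$, while the $2d$ local-time gap satisfies $rt(\tau_q-\alpha)-rt(\tau_q-\alpha-2d)\ge 2d/(1+\rho)=2(\delta+\pi)$; subtracting, $t_s\ge rt(\tau_q-\alpha-2d)+(\delta+\pi)>rt(\tau_q-\alpha-2d)$. This is precisely why the code subtracts a full $2d$ rather than a single $d$.

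The one wrinkle is the max in Line~L2: the resulting value is $\ge\tau_q-\alpha-2d$, whereas the displayed inequality only controls the fresh candidate $\tau_q-\alpha-2d$. I would close this by a short induction on the number of (re)assignments of the entry since its last reset. If the new candidate wins the max, the computation above applies directly; if the previously stored value wins, then $rt(\broadcast[G,m])$ is unchanged and equals the $rt$ of a value written by an earlier Line~K2 or Line~L2 step, to which the induction hypothesis already supplies a correct support-sending strictly later than that $rt$. Since the entry has not decayed, every earlier write in this chain took place after $\iota_0$ on genuine messages, so the induction bottoms out at a fresh write and the claim follows. I expect this interplay---combining the max-tracking with the guarantee, via $t>\iota_0+\drmv$, that the support underlying the winning value is real rather than initial-state garbage---to be the only delicate point; the quorum-to-correct-sender step and the timing arithmetic are routine.
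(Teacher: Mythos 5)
Your proposal is correct and follows essentially the same route as the paper's (much terser) proof: the Line~K2 case is witnessed by the node's own \support message, and the Line~L2 case uses the fact that the window $[\tau_q-\alpha,\tau_q]$ must contain a \support actually sent by a correct node, which can have been sent at most $d$ before the window opened and hence strictly later than the recorded time $\tau_q-\alpha-2d$. Your explicit induction over reassignments to handle the $\max$ in Line~L2, and the explicit drift arithmetic, are details the paper leaves implicit but do not constitute a different approach.
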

\begin{proof}
If the node didi it in Line~K2, then it trivially holds. Otherwise, the time window considered in Line~L2 includes a sending event of a correct node, and that happened at the earliest $d$ time units before the time window span.
\end{proof}

Using the above claims we can now prove the following.

\begin{lemma}\label{lemma-faulty-reset}
Once the system is stable, if any correct node, say $q$, executes Line~N4 with  $(G,m)$,
at some time $\bar t$, where $\bar t-rt(\TAU_q)\le \drmv-9d$, then 
\begin{enumerate}
\item
all correct nodes will execute Line~N4 with  
$(G,m)$ 
within $2d$ of each other in the interval $[\bar t-2d,\bar t+2d]$;
\item for any correct node $p$, $|rt(\TAU_q)-rt(\TAU_p)|\le 6d;$
\item some correct node executed Line~M4 later than  $\bar t- \drmv+7d$
\end{enumerate}
\end{lemma}

\begin{proof}
Let $q$ be such a correct node.  By the condition in Line~N3, $\readygm$ was last set by $q$ while 
executing Line~M2 at some time $t'$, later than $\bar t-\drmv.$ Consider the interval $(\iota_0+\drmv,\bar t-
\drmv-9d).$ By the definition of stability  it is longer than $4\drmv.$ 
If no correct node executed Line~M4 (with $G,m$) in this interval, since the system is stable, then the 
preconditions of \claimref{clm:no-m2} hold.  

Otherwise, let $t_1$ be the latest time in the above interval at which a correct node executed Line~M4.  By definition 
$|t'-t_1|>9d.$ Therefore, by \corollaryref{cor:2m4} and  \corollaryref{cor:2m2}, $|t'-t_1|>2\drmv$ and this holds 
for any other correct node that executed Line~M4 or Line~M2 within $9d$ of $q$, \ie within $9d$ of $t'.$
Therefore, again, the preconditions of \claimref{clm:no-m2} hold.

By \claimref{clm:before-support},
some correct node have sent $\support$ 
in the interval $[rt(\TAU_q),\bar t].$  
By the code of the primitive, it would have not done so if 
any correct node would have executed Line~M2 or Line~M4 in the interval 
$[\bar t-\drmv,rt(\TAU_q)-2d],$ since it would have set its $\lastgm$ at least $d$ prior to that sending.

This implies that $t'\ge rt(\TAU_q)-2d,$ and that any correct node executing Line~M2 or Line~M4 within 
$9d$ of $t'$ should do so later than  $t_2,$ where  $t_2=rt(\TAU_q)-2d\ge\bar t-\drmv+7d.$

By \claimref{clm:no-m2}, some correct node executed Line~M4, in the interval 
$[\bar t-\drmv-9d,\bar t].$ Since it should be within $9d$ of $t'$, by the above argument, that should happen at some time $t_3$ in the 
interval $[t_2,\bar t].$ 
Proving the third item of the claim.
By the code of the primitive, every correct node should execute Line~M2 in the interval 
$[t_3-5d,t_3+d]$. This implies that they should do so in the interval $[t_2,\bar t+d]$, which implies within the 
interval $I=[\bar t-\drmv+7d, \bar t+d]$. 

The correct node $q$ executed Line~N4 at time $\bar t.$ It has received at least $t+1$ $\ready$ messages 
from correct nodes.  Any correct node sending such a message should have executed Line~M2 prior to 
sending the message; and such a message is a result of executing either Line~M4 or Line~N2. By 
\claimref{clm:no-m2} that can happen either before time $t_1+\drmv$ or later than time $t_2$. If it would be 
earlier than $t_1+\drmv$, node $q$ would have decayed that message from its memory since we already 
argued that  $|t'-t_1|>2\drmv$.  

We conclude that all such messages from correct nodes were sent past time $t_2$. Therefore, by $\bar t+d$ 
each correct node would execute Line~N2, since its pre-conditions holds, and by $\bar t+2d$ all will execute 
Line~N4. Let $q'$ be the first correct node to execute Line~N4 past time $t_2.$ The above arguments imply 
that it has done so in the interval  $[\bar t-2d,\bar t+2d]$, and that all correct nodes would have executed 
Line~N4 within $2d$ of $q'.$ Proving the first item of the claim.

From the above discussion, some correct  node $q^{\prime\prime}$ 
executed Line~M4 in the interval $[\bar t-\drmv+6d,\bar t].$ 
Denote that time by $t^{\prime\prime}$. 
Node $q^{\prime\prime}$ collected $n-f$ $approve$ messages in the interval  
$[t^{\prime\prime}-3d,t^{\prime\prime}]$. At least one of which is from a correct node. 
Let $q'$ be that node and let $t'$ be the time it sent its $approve$ message. 
From the above discussion, $t'\in[\bar t-\drmv+6d-3d-d,\bar t]=[\bar t-\drmv+2d,\bar t].$ Node $q'$ collected 
$n-f$ $support$
messages, with at least $n-2f$ from correct nodes. Let $t_1$ be
the time at which the $(n-2f)^{\mbox th}$ $support$ message sent by
a correct node was received by $q'$. Since $q'$ executed Line~L4, all
these messages should have been received in the interval
$[t_1-2d,t_1]$. Node $q'$ should have set a recording time
$\tau,$ $rt(\tau)\ge t_1-4d,$
as a result of (maybe repeating) the execution of
Line~L2.

Every other correct node should have received the $(n-2f)^{\mbox th}$ $support$ message sent by
a correct node at some time in the interval $[t_1-d,t_1+d]$ with the set of $(n-2f)$
$support$ messages sent by correct nodes being received in the interval
$[t_1-3d,t_1+d].$ Each such correct node  should have set the recording time after (maybe
repeatedly) executing Line~L2, since this window satisfies the
precondition of Line~L1. Thus, eventually all recording times are
$\ge t_1-5d.$ Observe that since this interval is short, none of these messages 
would have been decayed by the time they are processed by the correct nodes.

Some correct node may send a $support$ message, by executing
Line~K2, at most $d$ time units after receiving these
$n-2f$ messages. This can not take place later than $t_1+2d,$
resulting in a recording time of $t_1+d,$ though earlier than its time
of sending the $support$ message. This $support$ message (with the
possible help of faulty nodes) can cause some correct node to
execute Line~L2 at some later time. The window within which the
$support$ messages at that node are collected should include the
real-time $t_1+3d,$ the latest time any $support$ from any correct
node could have been received. Any such execution will result in a
recording time that is $\le t_1+3d-2d=t_1+d.$ Thus the range of
recording times for all correct nodes (including $q$) are
$[t_1-5d,t_1+d].$  

To complete the proof of the second item we need to show that each correct node, $p$, actually sets its $\TAU_p.$
By assumption, $\bar t-rt(\TAU_q)\le \drmv-9d$, therefore $rt(\TAU_q)\ge  \bar t- \drmv+9d,$ 
This implies that $t_1+d\ge  \bar t- \drmv+9d.$ Implying that $t_1-5d\ge \bar t- \drmv+3d.$ 
Therefore, when each correct node executes Line~N4, its $\TAU$ is well defined, since the $\broadcast[G,m]$ entry wasn't decayed yet.
Thus, completing the proof.
\end{proof}

We are now ready to prove the properties of the primitive \initiator.
\begin{theorem}\label{lem:initiator}
Once the system is stable, the primitive \initiator presented in \figureref{alg:init-agree}
satisfies properties {\small [IA-1]} through {\small [IA-4]}.
\end{theorem}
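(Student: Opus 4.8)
The plan is to prove the four properties by assembling the technical claims and lemmas already established, handling the correct-General and faulty-General regimes separately and tracking the constants with care; throughout I work under stability, so the spacing claims (whose hypotheses ask for times past $\iota_0+10d$) all apply. For \textbf{[IA-1]} (Correctness) I would invoke \lemmaref{lemma-good-reset}: a correct $G$ respecting the Sending Validity Criteria makes every correct node send $(support,G,m)$ within $d$ of the invocation and execute Line~N4 within $2d$ of one another by $\hat t+4d$, which gives [1A] and [1B] at once. For [1C] and [1D] I would track the provenance of the value written into $\broadcast[G,m]$: each correct node records either $\tau_q-d$ in Line~K2 or $\tau_q-\alpha-2d$ in Line~L2, and since all correct invocations (hence all correct supports) fall in a window of width $\le d$, the recorded times lie in an interval of width $\le d$, yielding [1C]; pairing the earliest admissible recording $t_0-d$ with the latest acceptance $t_0+4d$, and noting a value is always recorded no later than the \iacpt\ that reads it, gives the sandwich $t_0-d\le rt(\TAU_q)\le rt(\tau_q)\le t_0+4d$ of [1D].

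For \textbf{[IA-2]} (Unforgeability) I would argue by backward chaining on the quorum thresholds, using the structural fact that $(support,G,m)$ is emitted \emph{only} in Line~K2, i.e.\ only upon an explicit invocation of \initiator$(G,m)$. Hence if no correct node invokes, no correct node ever sends support; since $n>3f$ forces $n-2f\ge f+1>0$, no correct node can gather the $n-f$ supports (of which $\ge n-2f$ must come from correct nodes) required for Line~L4, so no correct $\approve$ is produced, so no correct node sets $\readygm$ in Line~M2, and therefore none executes Line~N4 and \iacpts. For \textbf{[IA-3]} ($\dagr$-Relay) the faulty-General machinery does the work. I would first note $\drmv-9d=\dagr+\dz-9d=\dagr+4d>\dagr$, so the relay hypothesis $t-rt(\TAU_q)\le\dagr$ implies the freshness hypothesis $\bar t-rt(\TAU_q)\le\drmv-9d$ of \lemmaref{lemma-faulty-reset}. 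Then [3A] is immediate from that lemma's first two conclusions (all correct nodes execute Line~N4 inside $[\bar t-2d,\bar t+2d]$, giving $|t-t'|\le2d$, and $|rt(\TAU_q)-rt(\TAU_p)|\le6d$). For [3C] I would combine $|t-t'|\le2d$ with $rt(\TAU_{q'})\ge rt(\TAU_q)-6d$ to obtain $rt(\tau_{q'})-rt(\TAU_{q'})\le(t-rt(\TAU_q))+8d\le\BYZdur+8d$, while $rt(\TAU_{q'})\le rt(\tau_{q'})$ holds because the entry is recorded at or before acceptance; and for [3B] I would apply \claimref{clm:before-support} to trace $\broadcast[G,m]$ back to an actual $\support$ sent by a correct node, hence to a correct invocation at a time $t_2\ge rt(\TAU_{q'})$.

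Finally, \textbf{[IA-4]} (Uniqueness) splits by value. For the same value ([4B]) I would read the recorded times off the Line~M2/M4 executions that seed the two \iacpts, tie them to the recordings via \claimref{clm:before-support}, and apply \corollaryref{cor:2m2} and \corollaryref{cor:2m4}: the two executions are either within $9d$/$7d$ (collapsing, after the fixed support-to-$M2$ offsets, to the $\le6d$ clause, for which \lemmaref{lemma-faulty-reset} item~2 gives the clean bound whenever the accepts are fresh) or more than $2\drmv$ apart (yielding $>2\drmv-3d$). For distinct values ([4A]) I would exploit the single-value guard of Line~K1 — a correct node holds a recorded time for at most one value of $G$ at a time, and so cannot emit $\support$ for two values within the decay window — to argue that the support activity underlying the $(G,m)$ accept and that underlying the $(G,m')$ accept cannot be interleaved within $4d$, forcing $|rt(\TAU_q)-rt(\TAU_p)|>4d$.

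I expect the main obstacle to be \textbf{[IA-4]}, and within it [4A]. The timing parts of [IA-1]--[IA-3] largely inherit their bounds from the preceding lemmas, but turning the per-node ``one value at a time'' guard into a clean $4d$ separation of \emph{recording} times is not packaged by any earlier result: a naive counting of disjoint correct supporters fails under $n>3f$ (it would require $n\le3f$), so the separation must be extracted from the temporal structure of the support/approve windows, and propagating that separation through the approve and ready amplifications without the constants drifting is the delicate step.
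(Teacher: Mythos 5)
Your decomposition matches the paper's almost exactly: \corollaryref{cor:good-reset}/\lemmaref{lemma-good-reset} for {\small [IA-1]}, backward chaining through the quorum thresholds for {\small [IA-2]}, \lemmaref{lemma-faulty-reset} together with \claimref{clm:before-support} for {\small [IA-3]} (your explicit check that $\drmv-9d=\dagr+4d>\dagr$, so the relay hypothesis implies the hypothesis of \lemmaref{lemma-faulty-reset}, is a point the paper leaves implicit), and for {\small [IA-4]} the Line~K1 single-value guard plus quorum intersection. For {\small [4B]} your route through \corollaryref{cor:2m2} and \corollaryref{cor:2m4} is in the same spirit as the paper's, which ultimately rests on the $2\drmv+9d$ expiration of $\lastgm$ blocking any new correct support for the same value.

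The one genuine gap is the one you flag yourself: {\small [4A]} is asserted but not proved, and the mechanism you gesture at (the per-node ``one value at a time'' guard of Line~K1) does not by itself yield a quantitative $4d$ bound. The paper closes this with the \emph{other} timer, $last_q(G)$. When the first correct node executes Line~N4 at real-time $t'$, every correct node's recorded $\TAU$ for the accepted value is at most $t'+d$, and every correct node sets $last_q(G)$ at its own Line~N4 (within $2d$ of $t'$, by \lemmaref{lemma-faulty-reset}); the cleanup rule resets $last_q(G)$ only after $\dz-6d$, and Line~K1 requires $last_q(G)=\bot$, so no correct node can send a support message for a \emph{different} value before $t'+\dz-6d$. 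Since any recording (Line~K2 or Line~L2) is at most $2d$ earlier than the earliest correct support it reflects, the new value's $\TAU$ is at least $t'+\dz-8d$, giving a separation of at least $\dz-9d=4d$. (Simultaneous acceptance of two distinct values is already excluded by the $n>3f$ intersection argument you mention: two $(n-f)$-quorums of supporters share a correct node, which cannot support two values at once.) Without invoking the $\dz-6d$ expiration of $last_q(G)$ the constant $4d$ cannot be recovered, so this step must be added for the proof to go through.
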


\begin{proof}\mbox{\ }\\
\noindent{\bf Correctness:} \corollaryref{cor:good-reset} proves that when a correct General initiates the primitive, the data-structures at correct nodes are fresh.
Assume that within $d$ of each other all
correct nodes invoke \initiator($G,m$). Let $t_1$ be the real-time
at which the General invokes its copy of the \initiator then by $t_2,$ $t_2\le t_1+d,$ 
the last correct node did so. Since all data structures are {\em
fresh}, then no value $\{G,m^\prime\}$ appeared in  $\broadcaster$
$d$ time units before that, thus Line~K1 will hold for all correct nodes.
Therefore, every correct node sends $(support,G,m).$ Each such
message reaches all other correct nodes within $d$. Thus, between
$t_1$ and $t_2+d$ every correct node receives $(support,G,m)$ from
$n-f$ distinct nodes and sends $(approve,G,m).$ By $t_2+2d$ 
every correct node sends $(ready,G,m),$ and by $t_2+3d$
\iacpts
$\la G,m,\tau^\prime\ra,$ for some $\tau^\prime,$ thus, proving
{\small [IA-1A]}.

To prove {\small [IA-1B]}, let $q$ be the first to \iacpt after
executing Line~M4. Within $d$ all correct nodes will execute
Line~M2, and within $2d$ all will \iacpt.

Note that for every pair of correct nodes $q$ and $q^\prime$, the
associated initial recording times $\tau$ and $\tau^\prime$ satisfy
$|\tau-\tau^\prime|\le d.$ Line~K2 implies that the recording times
of correct nodes can not be earlier than $t_1-d.$ Some correct node
may see $n-2f,$ with the help of faulty nodes as late as $t_2+2d.$
All such windows should contain a $support$ from a correct node, so
should include real-time $t_2+d,$ resulting in a recording time of
$t_2-d.$ Recall that $t_2\le t_1+d,$ proving {\small [IA-1C]}.

To prove {\small [IA-1D]} notice that the fastest node may set
$\tau^\prime$
to be $t_1-d,$ but may \iacpt only by $t_2+3d\le t_1+4d.$\\

\noindent{\bf Unforgeability:} \\
If no correct node invokes \initiator and will not send
$(support,G,m),$ then no correct node will ever execute L4 and will
not send $(ready,G,m)$. Thus, no correct node can accumulate $n-f$ distinct
$(ready,G,m)$ messages and
therefore will not \iacpt $\la G,m\ra$.
Moreover, no correct will execute lines K2 or L2, and therefore
if $G$ is correct, no correct node will invoke \initiator, and no correct  will have any entry in the Initiator's data structure.\\

\noindent{\bf $\dagr$-Relay:} \\
Let $q$ be a correct node that \iacpts 
$\la  G,m,\TAU_q\ra$ at real-time $t$, such that
$0\le t-rt(\TAU_q)\le\dagr.$ It did so as a result of executing
Line~N4.  By assumption the preconditions of  \lemmaref{lemma-faulty-reset} hold, and therefore 
all correct nodes will \iacpt $\la G,m,\tau_{\bar
q}\ra$ within $2d$ of each other, in the interval $[t-2d,t+2d],$ with $\TAU$ values that are $6d$ apart.
Thus, proving  {\small [IA-3A]}.

To prove {\small [IA-3B]} notice that any range of messages considered in Line~L2
includes a $support$  of a correct node. The resulting recording
time will never be later than the sending time of the $support$
message by that correct node, and thus by some correct node.

\hide{
To
prove the second part of {\small [IA-3B]} consider again node $\bar
q$ from the proof of the $\dagr-relay$ property.  It collected
$n-2f$ $support$ messages from correct nodes in some interval $[\bar
t_1,t_1],$ where $\bar t_1\ge t_1-2d.$ These messages, when received
by any correct node will be within an interval of $4d,$ with the
first message in it from a correct node. These messages will trigger
a possible update of the recording time in Line~L2. Thus, the
resulting recording time of any correct node cannot be earlier than
some $2d$ of receiving a $support$ message from a correct node, thus
not earlier than $2d$ of sending such a message.
} 

The first part of {\small [IA-3C]} is immediate from Line~L2 and
Line~K2. For the second part observe that for every other correct
node $q^\prime,$ $rt(\tau_{q^\prime})\le rt(\tau_q) +2d$ and
$rt(\TAU_{q^\prime})\ge rt(\TAU_q)-6d$. Thus,
$rt(\tau_{q^\prime})-rt(\TAU_{q^\prime})\le
rt(\tau_q)-rt(\TAU_q)+8d\le\dagr+8d.$\\

\noindent{\bf Uniqueness:} \\
To prove {\small [IA-4]} observe that the conditions in Line~K1 implies that each non-faulty node sends a $support$ for
a single $m$ at a time. In order to \iacpt, a correct node needs to send
$approve$ after receiving $n-f$ $support$ messages.  That can happen
for at most a single value of $m,$ because $n>3f.$ 

By \lemmaref{lemma-faulty-reset}, once a correct node execute Line~N4, all do it within $2d$.  By the protocol, once a node decides it removes accepted messages and ignores new message associated with $(G,m)$ for $3d$. Therefore, all correct nodes issue \iacpt, and stop sending messages associated with $(G,m)$ before a correct one agrees to consider such messages. So past messages cannot be used again to reproduce another wave of decisions, unless a new correct node sends a new support for $(G,m).$

Previously sent messages for another value of $m$ will not produce a wave of decisions unless a new correct node will send $\support$ for such a value.  None will send support for a new value for $\dz-6d>6d,$ so by the time such a message will be sent, old values will be out of any window of consideration for executing any L or M lines of the code by a any correct node. Line N cannot be executed unless some correct node excuses Line~M4.

What is left to prove, is that
future invocations of the primitive will not violate {\small
[IA-4]}.  

Again, by \lemmaref{lemma-faulty-reset}, once a correct node execute Line~N4, all do it within $2d$. Let $q$ be the first to excute Line~N4 in the current execution, and let it be at time $rt(\tau_p)=t'.$ By $t'+d$ all non-faulty would execute Line~L2, and the latest any correct will execute Line~K2 is $t'+d$.  By inspecting the possible scenarios one can see that no non-faulty will execute Line~L2 later than $t'+5d$, and the latest value set by any correct node in that interval will never be later than $t'+d$. Thus, for every correct node $q$,  $rt(\TAU_q)\le rt(\tau_p)+d.$

The earliest time at which any correct node will send $\support$ later than that time will be at $rt(\tau_p)+\dz-6d.$ By inspecting the protocol, the earliest possible setting of value in Line~K2 will be to $rt(\tau_p)+\dz-6d-2d.$  Therefore, if we denote by $\tau$ timings in the former
invocation and by $\bar\tau$ timings in the later one, we conclude that for any two correct nodes $p$ and $q$, $rt(\bar\TAU_q)-rt(\TAU_p)\ge\dz-9d=4d.$ 
\end{proof}

We can now state the concluding corollary.

\begin{corollary}
The system converge from any initial state within $2\times\dreset=d,$ provided that there are $n-t$ non-faulty nodes that are continuously non-faulty during that period.
\end{corollary}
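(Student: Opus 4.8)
The plan is to read the corollary as the assertion that, measured from the time $\iota_0$ at which the network becomes correct and $n-f$ nodes begin their continuous non-faulty run, the system satisfies \definitionref{def:system-stable} by $\iota_1=\iota_0+2\dreset$, so that the full set of \initiator properties of \theoremref{lem:initiator} is in force from that point on (and $2\dreset=\dstb$). Since the only obligation in the definition of stability is that the system has been coherent for $\dstb=2\dreset$ units and remains so, the real content to establish is that $2\dreset$ of continuous coherence is \emph{sufficient} to wash out every artifact of the arbitrary initial configuration, so that the convergence analysis already carried out in \claimref{clm:no-sending}, \lemmaref{lemma-good-reset} and \corollaryref{cor:good-reset} becomes applicable.

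First I would recall the two sanitization observations made at the start of the proof of \lemmaref{lemma-good-reset}: past $\iota_0+d$ every received message that claims to come from a non-faulty node was genuinely sent by one, and past $\iota_0+6d$ the windows examined in Block~L and Block~M contain only such genuine messages. What remains to be cleared are (i) messages that a currently non-faulty node emitted out of its arbitrary initial memory rather than in response to a legitimate predecessor message, and (ii) the variables $\broadcast[G,*]$, $\readygm$, $last(G)$ and $\lastgm$ holding spurious values. For both I would invoke exactly the decay accounting performed inside the proof of \claimref{clm:no-sending}: within one $\dreset=20d+4\drmv$ window every such residual message decays and every such variable is reset by the cleanup block, provided no fresh legitimate initiation keeps the structures alive. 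This disposes of the case in which the correct General stays silent throughout $[\iota_0,\iota_0+\dreset)$.

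The substantive case is a correct General that was mid-invocation, or whose invocation failed, when coherence was regained. Here I would quote \lemmaref{lemma-good-reset} directly: it shows that if $G$ did initiate during $[\iota_0,\iota_0+\dreset)$ then, whether the in-flight invocation completes cleanly or trips one of the failure tests of criterion [IG3] and forces $G$ to pause for a further $\dreset$, by $\iota_0+2\dreset$ the next (and every subsequent) initiation of a correct General meets \emph{fresh} data structures in the precise sense of \corollaryref{cor:good-reset}. This is exactly where the factor of two in $\dstb=2\dreset$ is consumed: one $\dreset$ to let any in-progress or failed cycle drain, and a second $\dreset$ of decay margin to guarantee that the following initiation window is clean. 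Combining the two cases, by $\iota_1$ the state of every correct node is free of pre-$\iota_0$ residue and the anti-replay guarantees used in the Uniqueness argument hold, so the system is stable and \theoremref{lem:initiator} delivers [IA-1] through [IA-4].

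I expect the main obstacle to be the bookkeeping of this substantive case rather than any new idea: one must verify that the coexistence of a possibly-correct, mid-invocation General with adversarially seeded residual messages cannot manufacture a phantom wave of \iacpts that survives past $\iota_1$, and that a single extra $\dreset$ of margin genuinely closes every decay timer ($\drmv$, $2\drmv+9d$ and $\dz-6d$) that governs the data structures. All of these quantities have, however, already been bounded in the proofs cited above, so the corollary follows by assembling those bounds against the definition of $\iota_1=\iota_0+2\dreset$.
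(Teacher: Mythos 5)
Your proposal is correct and follows essentially the same route as the paper: the paper's own proof is a two-line observation that all the \initiator properties were established under the hypothesis of stability, and stability is by definition reached $2\times\dreset$ after the network becomes correct. You simply unpack the supporting machinery (the decay accounting of \claimref{clm:no-sending}, the case analysis of \lemmaref{lemma-good-reset}, and \corollaryref{cor:good-reset}) that the paper leaves implicit in its appeal to "all properties hold once the system is stable."
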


\begin{proof}
Since all properties hold once the system is stable, and stability is defined as 
 $2\times\dreset$ form the time the network is correct, we conclude the proof.
\end{proof}

One can reduce the requirement of having the same non-faulty nodes stay continuously so, but we do not see this optimization as an important issue. Moreover,  the proofs above shows that once a non-faulty node discards old values it can be considered correct.  Therefore we can state the following corollary.

\begin{corollary}
Once the system is stable, a non-faulty node that is non-faulty for $\dnode$ time, can be considered correct.
\end{corollary}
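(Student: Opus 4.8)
The plan is to read this corollary as the statement that the value $\dnode=\dv+\dagr$ fixed in \sectionref{sec:byzagree} is large enough to make the earlier \emph{definition} of a \emph{correct} node consistent with every property proved above. A node $p$ that has just stopped being faulty may hold arbitrary contents in its \initiator, \broadcastp and \ByzAgreement data structures, and the only way this residue can be harmful is by causing $p$ to emit a protocol message ($support$, $approve$, $ready$, $echo$, $init'$, $\dots$) or to set a variable ($\broadcast[G,m]$, $\readygm$, $\lastgm$, $last(G)$, $\TAU_p$) that is \emph{not} grounded in messages genuinely sent by correct nodes. This is exactly the hazard flagged when correctness was introduced: a node counted correct while still carrying spurious state could falsify \textbf{Validity}. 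So the goal is to show that after $\dnode$ continuous non-faulty real-time, during a period in which the network is correct, every such item has decayed or been reset, so that from then on $p$ is indistinguishable from a node that was correct all along. Note the hypothesis ``once the system is stable'' is used here: by the preceding corollary stability holds $\dstb=2\dreset$ after the network becomes correct, so the $\iota_0$-based genuineness arguments and the $\ge n-f$ quorum all apply.

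First I would account, structure by structure, for the residue left by the initial arbitrary state, using only the cleanup rules already present in the code. In \initiator every logged message decays after $\drmv$, the flag $\readygm$ decays with it, and the two longest-lived variables reset after $\dz-6d$ (for $last(G)$) and $2\drmv+9d$ (for $\lastgm$); in \broadcastp everything older than $(2f+3)\cdot\Phi=\dagr+2\Phi$ is removed; and in \ByzAgreement everything older than $\dagr+3d$ is erased. Each of these intervals is smaller than $\dv=15d+2\drmv$ (in particular $2\drmv+9d<\dv$ and, recalling $\drmv=\dagr+\dz$, also $\dagr+2\Phi<\dv$), so once $p$ has been non-faulty for $\dv$ no variable or logged message it holds still originates from its recovery state: every entry was written in response to a message that some node actually sent after the network became correct, which by the standing convention established in \claimref{clm:no-sending} and the $\iota_0+d$/$\iota_0+6d$ arguments of \lemmaref{lemma-good-reset} means genuinely sent by correct nodes.

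The remaining $\dagr$ in $\dnode=\dv+\dagr$ I would use to drain any agreement instance that $p$ happened to be running across its transition. Since the protocol terminates within $\dagr$ of invocation (the termination clause of \textbf{Timeliness}) and since after $\dv$ node $p$ initiates or joins no instance on spurious grounds, after a further $\dagr$ any instance in which $p$ took part while still dirty has returned and been reset, and every live instance $p$ participates in is one it entered with clean state; \corollaryref{cor:good-reset} then guarantees its data structure is \emph{fresh} whenever a correct General initiates. At that point adding $p$ to the set of correct nodes merely enlarges the set of participants that follow the protocol on the basis of genuinely received messages, and the proofs of \theoremref{lem:initiator} (properties {\small[IA-1]}--{\small[IA-4]}), of the {\small[TPS-*]} properties, and of \textbf{Agreement}/\textbf{Validity}/\textbf{Termination} use nothing more of a correct node than this; their quorum counts ($\ge n-f$, $\ge n-2f$) are only helped by one more genuine participant. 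Hence no property is violated once $p$ is reclassified as correct, which is the assertion of the corollary.

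The main obstacle is the \textbf{Validity}/\textbf{Agreement} interaction across the transition window: one must rule out that a stale item surviving from $p$'s arbitrary state causes it to \iacpt a value no correct General sent, or keeps a phantom agreement instance alive past the instant $p$ is declared correct. It is precisely this that forces the two-part bound --- matching the longest-lived datum $\lastgm$ (decaying at $2\drmv+9d$) against the correct General's same-value spacing $\dv$, and then adding the termination bound $\dagr$ --- rather than simply taking the single largest cleanup interval; the careful bookkeeping of instances invoked late in the interval $[s,s+\dv]$ (where $s$ is when $p$ became non-faulty) is where one must be most careful. Once both contributions are in place, the conclusion follows immediately from the cleanup rules together with the already-established properties.
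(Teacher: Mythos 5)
Your argument is correct and follows the same route the paper takes: the paper offers only the one-line justification that ``the proofs above show that once a non-faulty node discards old values it can be considered correct,'' and your structure-by-structure accounting of the cleanup intervals (all bounded by $\dv$, with the extra $\dagr$ draining any in-progress instance) is a faithful and arithmetically sound elaboration of exactly that observation. Nothing in your proposal diverges from, or falls short of, what the paper relies on.
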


\subsection{Proof of the \broadcastpl Properties}\label{ssec:bcast}


The proofs
essentially follow the arguments in the original
paper~\cite{FastAgree87}.


\begin{lemma}\label{lemma-1phase}
If a correct node $p_i$ sends a message at local-time $\tau_i,$
$\tau_i\le \TAU_i+r\cdot\Phi$ on $p_i$'s timer it will be
received and processed by each correct node $p_j$ at some local-time
$\tau_j,$ $\tau_j\le \TAU_j+(r+1)\cdot\Phi,$ on $p_j$'s timer.
\end{lemma}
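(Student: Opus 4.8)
The plan is to follow the message through real-time and then convert back to $p_j$'s local clock, using the anchor-skew bound supplied by the \initiator\ primitive. First I would let $s := rt(\tau_i)$ be the real-time at which $p_i$ sends the message. The hypothesis $\tau_i \le \TAU_i + r\cdot\Phi$ says that, measured on $p_i$'s timer, at most $r\cdot\Phi$ local-time units separate the anchor reading $\TAU_i$ from the sending event; hence $s$ lies at most $r\cdot\Phi$ past $rt(\TAU_i)$, where by the convention of \sectionref{sec:model} this local-unit quantity already folds in the drift factor $\rho$. Since $p_i$ is correct and the network is correct, the message is received and processed by every correct node $p_j$ within $d$ time units of $s$; let $t^\prime$ be the real-time at which $p_j$ does so and $\tau_j$ its local reading there, so that $\tau_j - \TAU_j$ on $p_j$'s timer measures the span $[rt(\TAU_j), t^\prime]$.

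Next I would bound $t^\prime - rt(\TAU_j)$ by the decomposition $(t^\prime - s) + (s - rt(\TAU_i)) + (rt(\TAU_i) - rt(\TAU_j))$. The first term is at most $d$ (delivery and processing), the second is at most $r\cdot\Phi$ (the sending hypothesis), and the third is at most $\TAU_{skew} = 6d$ by the anchor bound guaranteed by property {\small [IA-3A]} of the \initiator\ primitive, recalled at the start of \sectionref{sec:reliable-bdcst}. Reassembling and re-expressing in $p_j$-local units yields $\tau_j - \TAU_j \le r\cdot\Phi + d + 6d = r\cdot\Phi + 7d$. Since $\Phi = \TAU_{skew} + 2d = 8d$, we have $(r+1)\cdot\Phi = r\cdot\Phi + 8d$, so the required bound $\tau_j \le \TAU_j + (r+1)\cdot\Phi$ holds, with one full $d$ of slack to spare.

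The main obstacle is the bookkeeping between real-time and the two distinct local clocks: the quantities $r\cdot\Phi$ and $d$ are each stated on one node's timer but must be charged against another node's timer, and the two timers drift from real-time by up to $\rho$. The clean way to handle this is to lean on the convention of \sectionref{sec:model} that $d$ --- and therefore $\Phi = \TAU_{skew} + 2d$ --- is already an upper bound inclusive of the worst-case drift when measured on any correct node's local timer, so that each of the three summands above is a legitimate local-time upper bound regardless of which correct clock performs the measurement. The extra $2d$ deliberately built into $\Phi$ beyond the raw skew $\TAU_{skew}$ is precisely what absorbs the single delivery delay $d$ together with the residual clock-rate discrepancy, leaving the inequality valid; this is also the reason $\Phi$ was defined with that margin rather than as $\TAU_{skew} + d$.
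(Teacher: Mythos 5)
Your decomposition is the same as the paper's: pass to real time, bound the three legs (offset of the send from $p_i$'s anchor, delivery, anchor skew via {\small [IA-3A]}), and reassemble on $p_j$'s clock using $\Phi=\tskew+2d$. The one place your accounting diverges is the delivery leg. You charge a single $d$ for the message to be \emph{received and processed} and conclude there is a spare $d$ of slack. But in \broadcastp a node executes the blocks only once its anchor is defined; a message from $p_i$ may arrive at $p_j$ before $p_j$ has completed \initiator, and it then sits in the log until $\TAU_j$ is set. The paper therefore charges $d$ for arrival plus a further $d$ (justified by {\small [IA-3A]}) for $\TAU_j$ to become defined so the message can actually be processed, for a total of $2d$. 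That second $d$ consumes exactly the slack you thought you had, and it --- not residual clock-rate discrepancy, which is already folded into $d$ by the definition $d\equiv(\delta+\pi)\times(1+\rho)$ --- is the reason $\Phi$ is $\tskew+2d$ rather than $\tskew+d$. The final inequality still closes, now tightly rather than with room to spare, so the lemma is proved; only the intermediate claim and the explanation of where the $2d$ margin goes need correcting.
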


\begin{proof}
Assume that node $p_i$ sends a message at real-time $t$ with
local-time $\tau_i\le \TAU_i+r\cdot\Phi.$ Thus,
$\tau_i\le\TAU_i+r( \tskew +2d)\addrho.$ It should arrive at any
correct node $p_j$ within $d\addrho$. By
{\small IA-3A}, $\TAU_j$ will be defined and the message will be
processed no later than by another $d\addrho$. By {\small IA-3A},
$|rt(\TAU_i)-rt(\TAU_j)|< \tskew \addrho.$ Thus, $rt(\TAU_i)\le
rt(\TAU_j) + \tskew \addrho$, and at time $rt(\tau_j),$ by which the
message arrived and processed at $p_j,$ we get
$$rt(\tau_j)\le rt(\tau_i)+2d\addrho\le
rt(\TAU_i)+r( \tskew +2d)\addrho+2d\addrho\,,$$ and therefore
$$rt(\tau_j)\le
rt(\TAU_j)+ \tskew \addrho+r( \tskew +2d)\addrho+2d\addrho\le
rt(\TAU_j)+ (r+1)\cdot\Phi\,.$$
\end{proof}

\begin{lemma}\label{lemma4.1}
If a correct node ever sends $(echo^\prime,p,m,k)$ then at least one
correct node, say $q^\prime,$ must have sent
$(echo^\prime,p,m,k)$ at some local-time $\tau_{q\prime},$
$\tau_{q\prime}\le\TAU_{q\prime}+(2k+2)\cdot\Phi.$
\end{lemma}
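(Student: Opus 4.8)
The plan is to identify the earliest correct node (in real-time) that ever sends $(echo^\prime,p,m,k)$ and to show that it can only have done so through the timed line~Y5, never through the untimed amplification in line~Z3. Inspecting the \broadcastp primitive in \figureref{alg:reg-bdcst}, a correct node emits $(echo^\prime,p,m,k)$ in precisely two places: line~Y5, which is guarded by $\tau_q\le\TAU_q+(2k+2)\cdot\Phi$, and line~Z3, which fires ``at any time'' once $(echo^\prime,p,m,k)$ has been received from $\ge n-2f$ distinct nodes. By hypothesis some correct node sends $(echo^\prime,p,m,k)$, so the set of correct nodes that ever do so is nonempty.

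First I would let $q^\prime$ be a correct node whose real-time sending of $(echo^\prime,p,m,k)$ is earliest, at real-time $t^\prime$. This sending was triggered by line~Y5 or by line~Z3. Suppose it came from line~Z3. Then $q^\prime$ received $(echo^\prime,p,m,k)$ from at least $n-2f$ distinct nodes; since $n>3f$ and at most $f$ of those are faulty, at least $n-2f-f=n-3f\ge1$ of them are correct. Any such correct contributor sent its $(echo^\prime,p,m,k)$ message before $q^\prime$ received it, and reception is no later than $t^\prime$, so that correct node sent $(echo^\prime,p,m,k)$ strictly before $t^\prime$ --- contradicting the minimality of $t^\prime$.

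Hence $q^\prime$ sent $(echo^\prime,p,m,k)$ by executing line~Y5, whose guard is exactly $\tau_{q^\prime}\le\TAU_{q^\prime}+(2k+2)\cdot\Phi$ on $q^\prime$'s own timer. Taking this $q^\prime$ as the claimed node yields the statement.

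The one point I would flag is the mismatch between the selection criterion and the conclusion: $q^\prime$ is chosen as the minimizer of a real-time quantity, while the bound to be proved lives on $q^\prime$'s local clock. No explicit conversion is needed, because once line~Z3 is excluded the local-time guard of line~Y5 holds by fiat; the real-time ordering is used solely to guarantee a correct ``source'' for the wave of $(echo^\prime,p,m,k)$ messages and to rule out self-generation by amplification. The counting $n-3f\ge1$ is what prevents the arbitrary $(echo^\prime,p,m,k)$ traffic of the faulty nodes from masquerading as such a source, and the strictness ``before $t^\prime$'' rests only on the fact that a message is received after it is sent.
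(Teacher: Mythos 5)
Your proof is correct and follows essentially the same route as the paper: both arguments take the earliest (real-time) correct sender of $(echo^\prime,p,m,k)$, rule out line~Z3 by observing that the $n-2f$ received copies would have to include one from a correct node who sent strictly earlier, and then read the local-time bound off the guard of line~Y5. Your explicit counting $n-2f-f\ge 1$ and the remark about the real-time/local-time mismatch are slightly more careful than the paper's compressed one-sentence version, but the idea is identical.
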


\nhide{
\begin{proof}
Let $t$ be the earliest real-time by which any correct node $q$
sends the message $(echo^\prime,p,m,k).$ If $t>
rt(\TAU_q)+(2k+2)\cdot\Phi,$ node $q$ should have received
$(echo^\prime,p,m,k)$ from $n-2f$ distinct nodes, at least one of
which from a correct node, say $q^\prime,$ that was sent prior to
local local-time $\TAU_{q\prime}+(2k+2)\cdot\Phi.$ \mbox{\ }
\end{proof}
}

\begin{lemma}\label{lemma4.2}
If a correct node ever sends $(echo^\prime,p,m,k)$ then $p$'s
message $(init,p,m,k)$ must have been received by at least one
correct node, say $q^\prime,$ at some time $\tau_{q\prime},$
$\tau_{q\prime}\le\TAU_{q\prime}+2k\cdot\Phi.$
\end{lemma}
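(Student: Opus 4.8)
The plan is to chain backwards through the four message layers of \figureref{alg:reg-bdcst} --- from $echo'$ down through $init'$ and $echo$ to $init$ --- peeling off one layer per step and using the quorum bound $n>3f$ to keep a \emph{correct} node in the picture at every threshold. At each layer a threshold of the form ``$\ge n-f$'' or ``$\ge n-2f$'' distinct senders leaves, after discarding the at most $f$ faulty ones, at least $n-2f\ge f+1$ or $n-3f\ge 1$ correct senders; each such correct sender must have executed the \emph{previous} block, and hence did so under that block's local-time cap. The descent terminates at a correct node that executed line~W3, whose precondition is precisely the receipt of $(init,p,m,k)$ from $p$ under the cap $\tau\le\TAU+2k\cdot\Phi$ of line~W1 --- which is the witness $q'$ the lemma asks for.

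First I would fix the \emph{earliest} correct node (in real-time), call it $\bar q$, that sends $(echo',p,m,k)$; by hypothesis some correct node does. The point is that $\bar q$ cannot have sent via the untimed Block~Z (line~Z3): Z3 requires $\ge n-2f$ distinct $echo'$ senders, of which at least $n-3f\ge 1$ are correct and must therefore have sent $echo'$ strictly before $\bar q$, contradicting minimality. Hence $\bar q$ sent through line~Y5, so by line~Y1 we have $\tau_{\bar q}\le\TAU_{\bar q}+(2k+2)\cdot\Phi$. (One may instead quote \lemmaref{lemma4.1} to get a correct $echo'$ sender inside this cap, and then run the same minimality step to confine it to Block~Y.)

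The remaining layers are all timed, so the rest is mechanical. Line~Y5 requires $(init',p,m,k)$ from $\ge n-f$ distinct nodes, yielding $\ge n-2f\ge 1$ correct $init'$ senders; a correct node emits $init'$ only via line~X3, under the cap $\tau\le\TAU+(2k+1)\cdot\Phi$ of line~X1, and X3 in turn requires $(echo,p,m,k)$ from $\ge n-2f$ distinct nodes, hence $\ge n-3f\ge 1$ correct $echo$ senders. Finally a correct node sends $echo$ only via line~W3, whose precondition --- under the cap $\tau\le\TAU+2k\cdot\Phi$ of line~W1 --- is the direct receipt of $(init,p,m,k)$ from $p$, completing the chain.

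I expect the only delicate point to be the very first step: since Block~Z carries no time bound, the time estimate alone does not pin the $echo'$ sending to Block~Y, and it is the minimality argument (earliest correct $echo'$ sender), driven by the quorum intersection $n-3f\ge 1$, that forces the use of line~Y5 and supplies the anchoring cap $(2k+2)\cdot\Phi$. Once that layer is anchored, every subsequent ``threshold $\Rightarrow$ a correct sender in the previous, timed block'' implication goes through without further subtlety.
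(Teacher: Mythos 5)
Your proof is correct and follows essentially the same route as the paper: the paper's argument likewise invokes \lemmaref{lemma4.1} (whose own proof is exactly your minimality argument over the earliest correct $echo'$ sender) and then peels back through the $init'$ and $echo$ quorums to a correct recipient of $(init,p,m,k)$ within the $2k\cdot\Phi$ cap. Your explicit observation that the earliest correct $echo'$ sender must have used line~Y5 rather than the untimed Block~Z is a point the paper's write-up of \lemmaref{lemma4.2} leaves implicit, but it is the same underlying idea, not a different approach.
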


\nhide{
\begin{proof}
By \lemmaref{lemma4.1}, if a correct node ever sends
$(echo^\prime,p,m,k),$ then some correct node $q$ should send it
at  local-time $\tau_q,$ $\tau_q\le\TAU_q+(2k+2)\cdot\Phi.$ By
the primitive \broadcastp, $q$ have received
$(init^\prime,p,m,k)$ from at least $n-f$ nodes by some local-time
$\tau_q,$ $\tau_q\le\TAU_q+(2k+2)\cdot\Phi.$ At least one of
them is a correct node $q^{\prime\prime}$  who have received $n-2f$
$(echo,p,m,k)$ at some local-time $\tau_{q{\prime\prime}},$
$\tau_{q{\prime\prime}}\le\TAU_{q{\prime\prime}}+(2k+1)\cdot\Phi.$
One of which was sent by a correct node $\bar q$ that should have
received $(init,p,m,k)$ before sending $(echo,p,m,k)$ at some
local-time $\tau_{\bar q},$ $\tau_{\bar q}\le\TAU_{\bar
q}+2k\cdot\Phi.$
\end{proof}
}

\begin{lemma}\label{lemma-3d}
If a correct node $p$ invokes the primitive \broadcastp $(p,m,k)$
at real-time $t_p,$ then each correct node $q$ accepts $(p,m,k)$ at
some real-time $t_q,$ such that  $|t_p-t_q)|\le 3d.$
\end{lemma}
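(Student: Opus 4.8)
The plan is to trace the ``fast path'' of the \broadcastp primitive, namely the chain of sends $init \to echo \to \iacpt$ implemented by lines~V, W3 and X5, and to read the real-time bound directly off the bounded-delay assumption, using \lemmaref{lemma-1phase} only to guarantee that the relevant acceptance windows are open. First I would record that, by line~V, the correct node $p$ sends $(init,p,m,k)$ to all nodes at the very instant $t_p$ of invocation. By the bounded-delay assumption every correct node receives and processes this message within $d$, i.e.\ by real-time $t_p+d$. Appealing to \lemmaref{lemma-1phase} together with the anchor-skew bound {\small [IA-3A]} on the $\TAU$ values, I would argue that this arrival falls inside the window $\tau_q\le\TAU_q+2k\cdot\Phi$ of line~W1, so each correct node executes W3 and sends $(echo,p,m,k)$ no later than $t_p+d$.

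Next I would count echoes. Since there are at least $n-f$ correct nodes and each of them sends $(echo,p,m,k)$ by $t_p+d$, all of these echoes reach every correct node within a further $d$, hence by $t_p+2d$; again by \lemmaref{lemma-1phase} this lies inside the line~X window $\tau_q\le\TAU_q+(2k+1)\cdot\Phi$. Thus by $t_p+2d$ each correct node has received $(echo,p,m,k)$ from $\ge n-f$ distinct nodes and accepts through line~X5, yielding the upper bound $t_q\le t_p+2d$.

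For the matching lower bound I would run an unforgeability-style dependency chase. Any acceptance---whether through line~X5 or through the relay line~Z5---requires at least $n-2f\ge 1$ \emph{correct} nodes to have emitted an $echo$ (respectively an $echo'$) message, and by the code a correct node emits such a message only after it has received $p$'s $init$, which is not sent before $t_p$. Since the system is stable, all residual messages from the initial state have already decayed, so no correct node can accept before $t_p$; hence $t_q\ge t_p$. Combining the two directions gives $0\le t_q-t_p\le 2d$, so in particular $|t_p-t_q|\le 3d$, as claimed.

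I expect the main obstacle to be the bookkeeping that keeps the two timing windows of lines~W1 and~X open when the messages arrive: this is where the precondition on $p$'s local invocation time enters implicitly, and where \lemmaref{lemma-1phase} and the anchor-skew bound {\small [IA-3A]} must be combined so that both the $init$ and the subsequent $echo$ messages beat their respective window deadlines at every correct node. By contrast, the echo count and the unforgeability chase are routine once stability has eliminated spurious initial-state messages, and the slack between the actual $2d$ of the fast path and the stated $3d$ is harmless.
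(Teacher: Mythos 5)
Your overall strategy (trace the fast path $init\to echo\to$ accept and read the bound off the $d$-bounded delay) is the same as the paper's, but there is a genuine gap in the first step that makes your intermediate bound $t_q\le t_p+2d$ false. You assume that a correct node $q$ sends $(echo,p,m,k)$ as soon as $p$'s $init$ message arrives, i.e.\ by $t_p+d$, and you invoke {\small [IA-3A]} only to check that this arrival beats the deadline of line~W1. But the primitive explicitly states that nodes execute the blocks only once $\TAU$ is defined, and merely log messages until then. Node $p$ invokes \broadcastp at the moment its own $\TAU_p$ becomes defined (right after its \iacpt), whereas by {\small [IA-3A]} (or {\small [IA-1B]}) another correct node $q$ may not obtain $\TAU_q$ until $2d$ later. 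So the $init$ message can sit unprocessed in $q$'s log until $t_p+2d$; the echo is then sent at $t_p+2d$ and reaches the other correct nodes only by $t_p+3d$. This is exactly how the paper's proof proceeds: $init$ arrives by $t_p+d$; by {\small [IA-3A]} all correct nodes have $\TAU$ defined and process it (executing Line~W3) by $t_p+2d$; the echoes are collected and Line~X5 fires by $t_p+3d$. The $3d$ in the statement is therefore tight along this chain, not slack over a ``true'' $2d$ as you suggest in your closing remark.

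The rest of your argument is fine and even slightly more complete than the paper's: the echo-counting step is routine, and your unforgeability-style dependency chase for the lower bound $t_q\ge t_p$ (any acceptance needs $n-2f\ge 1$ correct echo or $echo'$ senders, each of which must have seen $p$'s $init$) is a legitimate addition that the paper leaves implicit. To repair the proof you only need to replace ``sends $(echo,p,m,k)$ no later than $t_p+d$'' by ``processes the $init$ and sends $(echo,p,m,k)$ no later than $t_p+2d$, since by {\small [IA-3A]} its anchor $\TAU_q$ is defined by then,'' and propagate the extra $d$ through to the conclusion $t_q\le t_p+3d$.
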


\nhide{
\begin{proof}
The $init$ message of $p$ sent in Line~V will arrive to every node
by $t_p+d.$ By {\small IA-3A}, by $t_p+2d$ all will have their
$\TAU$ defined and will process the $init$ message. By
\lemmaref{lemma-1phase}, all will execute Line~W3 by that time. By
$t_p+3d$ all will execute Line~X5 and will accept.
\end{proof}
}

\begin{theorem}\label{thm4.1}
The \broadcastp primitive presented in \figureref{alg:reg-bdcst}
satisfies properties [TSP-1] through [TSP-4].
\end{theorem}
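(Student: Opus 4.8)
The plan is to establish the four [TPS-*] properties one at a time, mirroring the structure of the original proof of~\cite{FastAgree87} but replacing every appeal to the lock-step round model with the message-driven phase bounds supplied by \lemmaref{lemma-1phase}. Throughout, the combinatorics rest on the three standard quorum facts implied by $n>3f$: any $n-2f$ distinct senders contain at least one correct node (since $n-2f\ge f+1$); any $n-f$ received messages contain at least $n-2f$ from correct nodes; and any two $(n-f)$-sets intersect in at least $n-2f\ge f+1$ nodes, hence in a correct one.

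For Correctness (TPS-1) I would combine \lemmaref{lemma-3d}, which already yields the real-time bound $|rt(\tau_p)-rt(\tau_q)|\le 3d$, with a phase count. If the correct broadcaster sends $(init,p,m,k)$ at $\tau_p\le\TAU_p+(2k-1)\cdot\Phi$, then by \lemmaref{lemma-1phase} every correct node processes it and executes Line~W3 by local phase $2k$; each of the at least $n-f$ correct nodes thus sends $(echo,p,m,k)$, and a second application of \lemmaref{lemma-1phase} delivers $n-f$ echoes to every correct node by phase $2k+1$, so Line~X5 fires within the $(2k+1)\cdot\Phi$ window. For Unforgeability (TPS-2) I would run the contrapositive chain: if correct $p$ never invokes the primitive it never sends $(init,p,m,k)$, so by authenticated identity no correct node executes Line~W3; then at most $f<n-2f$ echoes can ever be collected, so no correct node executes Line~X3 or accepts in Line~X5; with no correct $(init^\prime,p,m,k)$ at most $f<n-2f$ such messages exist, killing Line~Y5, and since the relay in Line~Z3 likewise needs $n-2f>f$ copies, no correct $(echo^\prime,p,m,k)$ is ever produced, so Line~Z5 never fires either.

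The heart of the argument is Relay (TPS-3), and I would split on the line through which the correct node $q_1$ accepts at $\tau_1\le\TAU_1+r\cdot\Phi$. If it accepted in Line~Z5 it collected $n-f$ copies of $(echo^\prime,p,m,k)$, at least $n-2f$ from correct nodes; each such copy reached $q_1$ within $d$ of being sent, hence reaches every correct node within a further $d$, so every correct node collects $n-2f$ and relays via Line~Z3, after which every correct node holds $n-f$ copies and accepts in Line~Z5, all within about $2d$ real-time of $rt(\tau_1)$. If instead $q_1$ accepted in Line~X5 it collected $n-f$ echoes, at least $n-2f$ from correct nodes; the same in-flight reasoning propagates those correct echoes to all correct nodes, which then send $(init^\prime,p,m,k)$ (Line~X3), then, upon collecting $n-f$ of these, send $(echo^\prime,p,m,k)$ (Line~Y5), and finally accept in Line~Z5, all within about $3d$ real-time. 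In either case acceptance spreads to every correct node within $3d$ real-time; converting this real-time spread back into $q_2$'s local phase count is exactly where the anchor skew $\tskew=6d$ between the $\TAU$ values, bounded by {\small IA-3A}, and the drift $\rho$ must be absorbed. Since each of these quantities is hidden inside one $\Phi=\tskew+2d$, the bound degrades by at most two phases, yielding $\tau_2\le\TAU_2+(r+2)\cdot\Phi$.

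Finally, for Detection of broadcasters (TPS-4) I would observe that whenever a correct node accepts, the Relay analysis shows that at least $n-2f$ correct nodes send $(init^\prime,p,m,k)$: directly in the X5 case, and in the Z5 case via \lemmaref{lemma4.1} and \lemmaref{lemma4.2}, which locate a correct node sending $(echo^\prime,p,m,k)$ by phase $2k+2$ after having collected $n-f$ messages $(init^\prime,p,m,k)$. These correct copies reach every correct node, so Line~Y3 adds $p$ to $broadcasters$ by $\tau_q\le\TAU_q+(2k+2)\cdot\Phi$; the second half is the same contrapositive chain as TPS-2, since with no correct $(init^\prime,p,m,k)$ ever sent at most $f<n-2f$ exist and Line~Y3's threshold is never met. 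I expect the main obstacle to be the bookkeeping in TPS-3: reconciling the asynchronous, message-driven progress within a phase with the per-node local-phase indices, and checking that the two-phase slack genuinely covers the combined effect of the $6d$ anchor skew and the multiplicative drift accumulated across the echo/echo$^\prime$ amplification.
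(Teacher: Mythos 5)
Your proposal is correct and follows essentially the same route as the paper: the same four-way decomposition, the same reliance on \lemmaref{lemma-1phase} for the phase accounting and \lemmaref{lemma-3d} for the real-time bound in {\small [TPS-1]}, the same use of \lemmaref{lemma4.1} and \lemmaref{lemma4.2} for Unforgeability and Detection, and the same absorption of the $\tskew$ anchor skew into the two-phase slack of the Relay bound. The only (harmless) discrepancy is in the Relay real-time accounting, where the paper budgets an extra $d$ for every correct node to have its $\TAU$ defined before processing buffered messages, arriving at $4d$ rather than your ``about $3d$''; since $\tskew+4d\le 2\Phi$, the conclusion $\tau_2\le\TAU_2+(r+2)\cdot\Phi$ is unaffected.
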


\begin{proof}

{\bf Correctness:} Assume that a correct node $p$
\broadcastp\hspace{-0.35em}s $(p,m,k)$ at $\tau_p,$
$\tau_p\le\TAU_p+(2k-1)\cdot\Phi,$ on its timer. Any correct
node, say $q,$ receives $(init,p,m,k)$ and sends $(echo,p,m,k)$ at
some $\tau_q,$ $\tau_q\le\TAU_q+2k\cdot\Phi$ on its timer. Thus,
any correct node, say $\bar q$ receives $n-f$ $(echo,p,m,k)$ from
distinct nodes at some $\tau_{\bar q},$ $\tau_{\bar q}\le\TAU_{\bar
q}+(2k+1)\cdot\Phi,$ on its timer and accepts $(p,m,k).$ The
second part of the correctness is a result of \lemmaref{lemma-3d}.

{\bf Unforgeability:} If a correct node $p$ does not broadcast
$(p,m,k),$ it does not send $(init,p,m,k),$ and no correct node will
send $(echo,p,m,k)$ at some $\tau,$ $\tau\le\TAU+2k\cdot\Phi,$
on its timer. Thus, no correct node accepts $(p,m,k)$ by
$\TAU+(2k+1)\cdot\Phi$ on its timer. If a correct node would
have accepted $(p,m,k)$ at a later time it can be only as a result
of receiving $n-f$ $(echo^\prime,p,m,k)$ distinct messages, some of
which must be from correct nodes. By \lemmaref{lemma4.2}, $p$
should have sent $(init,p,m,k),$ a contradiction.

{\bf Relay:} The delicate point is when a correct node issues an
accept as a result of getting echo messages. So assume that $q_1$
accepts $(p,m,k)$ at $t_1=rt(\tau_1)$ as a result of executing
Line~X5. By that time it must have received $(echo,p,m,k)$ from
$n-f$ nodes, at least $n-2f$ of them sent by correct nodes. Since
every correct node among these has sent its message by
$\TAU+2k\cdot\Phi$ on its timer, by \lemmaref{lemma-1phase},
all those messages should have arrived to every correct node $q_i$
by $\tau_i\le\TAU_i+(2k+1)\cdot\Phi$ on its timer. Thus, every
correct node $q_i$ should have sent $(init^\prime,p,m,k)$ at some
$\tau_i,$ $\tau_i\le\TAU_i+(2k+1)\cdot\Phi,$ on its timer. As a
result, every correct node will receive $n-f$ such messages by some
$\bar\tau,$ $\bar\tau\le\TAU+(2k+2)\cdot\Phi$ on its timer and
will send $(echo^\prime,p,m,k)$ at that time, which will lead each
correct node to accept $(p,m,k)$ at a local-time $\tau_i.$

Now observe that all $n-2f$ $(echo,p,m,k)$ were sent before time
$t_1.$ By $t_1+d$ they arrive to all correct nodes. By $t_1+2d$ all
will have their $\TAU$ defined and will process them. By $t_1+3d$
their $(init^\prime,p,m,k)$ will arrive to all correct nodes, which
will lead all correct nodes to send $(echo^\prime,p,m,k).$ Thus, all
correct nodes will accept $(p,m,k)$ at time $\tau_i\le t_1+4d.$

By assumption, $t_1=rt(\tau_1)\le rt(\TAU_1)+r\cdot\Phi.$ By
{\small IA-3A}, $rt(\TAU_1)\le rt(\TAU_i)+\tskew$.  Therefore we
conclude: $rt(\tau_i)\le rt(\tau_1)+4d\le
rt(\TAU_1)+r\cdot\Phi+4d \le rt(\TAU_i)+\tskew +
r\cdot\Phi+4d\le rt(\TAU_i)+ (r+2)\cdot\Phi.$

The case that the accept is a result of executing Line~Z5 is a
special case of the above arguments.

{\bf Detection of broadcasters:} As in the original proof (\cite{FastAgree87}), we first
argue the second part.  Assume that a correct node $q$ adds node $p$
to $broadcasters.$ It should have received $n-2f$
$(init^\prime,p,m,k)$ messages.  Thus, at least one correct node has
sent $(init^\prime,p,m,k)$ as a result of receiving $n-2f$
$(echo,p,m,k)$ messages. One of these should be from a correct node
that has received the original broadcast message of $p.$

To prove the first part, we consider two similar cases to support
the Relay property. If $r=k$ and the correct node, say $q,$ accepts
$(p,m,k)$ as a result of receiving $(echo,p,m,k)$ from $n-f$ nodes
by some $\tau_q,$ $\tau_q\le\TAU_q+(2k+1)\cdot\Phi,$ on its
timer. At least $n-2f$ of them were sent by correct nodes. Since
each correct node among these has sent its message at some $\tau,$
$\tau\le\TAU+2k\cdot\Phi,$ by \lemmaref{lemma-1phase}, all
those messages should have arrived to any correct node, say $q_i,$
by some $\tau_i,$ $\tau_i\le\TAU_i+(2k+1)\cdot\Phi$ on its
timer. Thus, each correct node, say $q_j$ should have sent
$(init^\prime,p,m,k)$ at some $\tau_j,$
$\tau_j\le\TAU_j+(2k+1)\cdot\Phi,$ on its timer. As a result, by
\lemmaref{lemma-1phase}, each correct node, say $q^\prime,$ will
receive $n-f$ such messages by some $\tau_{q\prime},$
$\tau_{q\prime}\le\TAU_{q\prime}+(2k+2)\cdot\Phi$ on its timer
and will add $p$ to $broadcasters.$

Otherwise, $q$ accepts $(p,m,k)$ as a result of receiving from $n-f$
nodes $(echo^\prime,p,m,k)$ by some $\tau_q$ on its timer. By
\lemmaref{lemma4.1} a correct node, say $q_i,$ sent
$(echo^\prime,p,m,k)$ at some $\tau_i,$
$\tau_i\le\TAU_i+(2k+2)\cdot\Phi.$ It should have received $n-f$
$(init^\prime,p,m,k)$ messages by that time. All such messages that
were sent by correct nodes were sent at some $\tau,$
$\tau\le\TAU+(2k+1)\cdot\Phi,$ on their timers and should arrive
at each node $q_j,$ at some $\tau_j,$
$\tau_j\le\TAU_j+(2k+2)\cdot\Phi,$ on its timer. Since there are
at least $n-2f$ such messages, all will add $p$ to $broadcasters$ at
some $\tau,$ $\tau\le\TAU+(2k+2)\cdot\Phi,$ on
their timers. 

\end{proof}

\subsection{Proof of the \ByzAgreementl Properties}\label{ssec:jbyzagree}

\begin{theorem}\label{thm:byz} (Convergence)
Once the system is stable, any invocation of \ByzAgreement
presented in \figureref{alg:Byz-alg} satisfies the Termination
property. When $n>3f$, it also satisfies the Agreement and Validity
properties.
\end{theorem}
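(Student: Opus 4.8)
The plan is to obtain all three properties directly from the two primitives already in hand---the \initiator properties {\small [IA-1]}--{\small [IA-4]} of \theoremref{lem:initiator} and the \broadcastp properties {\small [TPS-1]}--{\small [TPS-4]} of \theoremref{thm4.1}---so that the whole argument becomes a bookkeeping over the block structure of \figureref{alg:Byz-alg}, mirroring~\cite{FastAgree87}. I would fix one instance $(G,m)$, measure the ``round'' of a correct node $q$ by the elapsed local time $\tau_q-\TAU_q$ in units of $\Phi$, and dispose of Termination first, since it is essentially free: a correct node enters blocks~R--U only after \initiator has set $\TAU_q$, and from that instant block~U returns $\la\perp,\TAU_q\ra$ as soon as $\tau_q>\TAU_q+(2f+1)\cdot\Phi$. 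Thus once $\TAU_q$ is defined the node halts within $\dagr=(2f+1)\cdot\Phi$ local-time units, and {\small [IA-1A]}/{\small [IA-3C]} guarantee that $\TAU_q$ is defined (within $4d$ of an explicit invocation), so every correct node returns in finite time. This part needs nothing beyond the primitives.

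For Validity I would take a correct $G$ invoking within a window $[t_0,t_0+d]$ and read off {\small [IA-1]}: every correct node \iacpts $\la G,m,\TAU_q\ra$, the acceptances are within $2d$ of one another, the estimates $\TAU_q$ lie within $d$, and by {\small [IA-1D]} $t_0-d\le rt(\TAU_q)\le rt(\tau_q)\le t_0+4d$. These are exactly the bounds that make the guard $\tau_q-\TAU_q\le 4d$ of line~R1 hold at every correct node, so each decides $\la G,m\ra$ through block~R in the first round. That no correct node decides a different value follows because line~R1 fires only on an \iacpt, and {\small [IA-4A]} together with the unforgeability clause {\small [IA-2]} forbids an \iacpt of any $\la G,m'\ra$ with $m'\neq m$ at a nearby anchor; by {\small [TPS-2]} every value relayed in the later blocks is therefore $\la G,m\ra$ as well.

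Agreement is the real work, and it is where $n>3f$ enters. Let $q^\star$ be the first correct node to return a value $\neq\perp$; it emits a broadcast in line~R3 (``level~$1$'') or in line~S3, after accepting a chain $(p_i,\la G,m\ra,i)$, $1\le i\le r$, of distinct senders $\neq G$ (``level~$r+1$''). A short sub-lemma first pins the decided value: the thresholds $n-f$ and $n-2f$ inside both primitives force $2(n-f)>n+f$, \ie $n>3f$, so two correct nodes cannot anchor chains on conflicting values, and {\small [IA-4]} and {\small [TPS-2]} propagate this through every level; hence all correct decisions are for the same $\la G,m\ra$. For propagation I would split on how $q^\star$ decided. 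If it decided in block~R, then $q^\star$ \iacpts with $t-rt(\TAU_{q^\star})\le 4d\le\dagr$, so {\small [IA-3]} forces every correct node to \iacpt $\la G,m\ra$ within $2d$ and with anchors within $6d$, while {\small [TPS-1]} delivers $q^\star$'s level-$1$ broadcast to all; the nodes that miss the tight R1 guard then meet line~S1 with $r=1$. If $q^\star$ decided in block~S at level $r$, I note first that it did so within $\dagr$ of its own anchor, so {\small [IA-3]} again makes every correct node \iacpt $\la G,m\ra$ and run blocks~R--U with an anchor within $6d$ of $q^\star$'s; I then apply {\small [TPS-3]} to the accepted chain and {\small [TPS-1]} to $q^\star$'s level-$(r+1)$ broadcast, so every correct node accepts a chain one level longer within two further phases and decides. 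Finally, to rule out a spurious abort in block~T, I would invoke {\small [TPS-4]}: once a correct node accepts a length-$r$ chain, broadcaster detection gives $|broadcasters|\ge r-1$ at every correct node, so the guard of line~T1 cannot fire while a correct node is deciding.

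The genuine obstacle is closing the round budget. One must show that the first correct decision happens at a low enough level that, after the unavoidable two-phase (one S-level) lag of the \broadcastp relay, every correct node still accepts its chain before \emph{both} its own line~S1 deadline $\TAU_q+(2r+1)\cdot\Phi$ and the block~U ceiling $\TAU_q+(2f+1)\cdot\Phi$, and to do so uniformly over the $6d$ spread of anchors permitted by {\small [IA-3A]}. Bounding that level is the crux: the ``a correct node occurring inside the chain would itself have decided earlier'' argument caps the chain length by the number $f$ of Byzantine nodes, and one must then check that this cap, the relay lag, and the anchor skew are all absorbed by the slack $\tskew+2d=\Phi$ with room to spare before block~U fires. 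Once these constants are tracked, the remainder is a faithful transcription of the relay-based agreement argument of~\cite{FastAgree87}.
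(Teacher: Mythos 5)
Your proposal follows essentially the same route as the paper's proof: Termination from the block-U ceiling, Validity read off {\small [IA-1]} so that all correct nodes pass line~R1, and Agreement via the first correct decider, with {\small [TPS-1]}/{\small [TPS-3]} propagating the decision one level later and {\small [TPS-4]} blocking a premature abort in block~T --- which is exactly the content of the paper's two auxiliary lemmata. The ``crux'' you defer (bounding the level of the first decision) is resolved in the paper precisely by the argument you name: for $r=f$ the chain of $f$ distinct non-$G$ senders must contain a correct node, which by {\small [TPS-2]} decided at some level $j\le f$, reducing to the earlier cases, and the $(2r+3)\Phi$ relay deadline stays under the $(2f+1)\Phi$ ceiling for $r\le f-1$.
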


\begin{proof}

Notice that the General $G$ itself is one of the nodes, so if it is
faulty then there are only $f-1$ potentially faulty nodes. We do not
use that fact in the proof since the version of \ByzAgreement
presented does not refer explicitly to the General. One can adapt
the proof and reduce $\dagr$ by $2\cdot\Phi$ when specifically
handling that case.

By \corollaryref{cor:good-reset}, by the time the system becomes stable, all data structures are fresh.

We begin by proving Validity.

\noindent{{\bf Validity}:} Since all the correct nodes invoke the
primitive \ByzAgreement as a result of a value sent by a correct
$G$, they will all invoke \initiator within $d$ of each other with
fresh data structure, hence {\small [IA-1]} implies that they all will execute Block~R within $2d$ of each other, and Validity holds.\\

The rest of the proof makes use of the following two lemmata.

\begin{lemma}\label{lemma2} If a correct node $p$ aborts at
local-time $\tau_p,$ $\tau_p>\TAU_p+(2r+1)\cdot\Phi,$ on its
timer, then no correct node $q$ decides at a time $\tau_q,$
$\tau_q\ge\TAU_q+(2r+1)\cdot\Phi,$ on its timer.
\end{lemma}

\begin{proof}
Let $p$ be a correct node that aborts at time $\tau_p,$
$\tau_p>\TAU_p+(2r+1)\cdot\Phi.$ In this case it should have
identified at most $r-2$ broadcasters by that time.  By the
detection of the broadcasters property {\small [TPS-4]}, no correct
node will ever accept $\la G,m'\ra$ and $r-1$ distinct messages
$(q_i,m',i)$ for $1\le i\le r-1,$ since that would have caused each
correct node, including $p,$ to hold $r-1$ broadcasters by some time
$\tau,$ $\tau\le\TAU+(2(r-1)+2)\cdot\Phi$ on its timer. Thus, no
correct node, say $q$, can decide at a time
$\tau_q\ge\TAU_q+(2r+1)\cdot\Phi$ on its timer.
\end{proof}

\begin{lemma}\label{lemma1} If a correct node $p$ decides at time
$\tau_p,$ $\tau_p\le\TAU_p+(2r+1)\cdot\Phi,$ on its timer, then
each correct node, say $q$, decides by some time $\tau_q,$
$\tau_q\le\TAU_q+(2r+3)\cdot\Phi$ on its timer.
\end{lemma}

\begin{proof}

Let $p$ be a correct node that decides at local-time $\tau_p,$
$\tau_p\le\TAU_p+(2r+1)\cdot\Phi.$ We consider the following
cases:

\begin{enumerate}

\item $ r=0:$ No correct node can abort by a time $\tau,$ $\tau\le\TAU+(2r+1)\cdot\Phi$, since the
inequality will not hold. Assume that node $p$ have accepted $\la
G,m'\ra$ by $\tau_p\le\TAU_p+4d\le\TAU_p+\Phi.$  By the relay
property {\small [TPS-3]} each correct node will accept $\la
G,m'\ra$ by some time $\tau,$ $\tau\le\TAU+3\cdot\Phi$ on its
timer. Moreover, $p$ invokes {\bf \broadcastp}$(p,m',1)$, by the
Correctness property {\small [TPS-1]} it will be accepted by each
correct node by time $\tau,$ $\tau\le\TAU+3\cdot\Phi,$ on its
timer. Thus, all correct nodes will have $value\ne \perp$ and will
broadcast and stop by time $\TAU+3\cdot\Phi$ on their timers, when executing Block~S.

\item $1 \le r \le f-1:$ Node $p$ must have accepted $\la G,m'\ra$
and also accepted $r$ distinct $(q_i,m',i)$ messages for all $i, 2
\le i \le r$, by time $\tau,$ $\tau\le\TAU+(2r+1)\cdot\Phi,$ on
its timer. By \lemmaref{lemma2}, no correct node aborts by that
time. By Relay property {\small [TPS-3]} each $(q_i,m',i)$ message
will be accepted by each correct node by some time $\tau,$
$\tau\le\TAU+(2r+3)\cdot\Phi,$ on its timer. Node $p$ broadcasts
$(p,m',r+1)$ before stopping. By the Correctness property, {\small
[TPS-1]}, this message will be accepted by every correct node at
some time $\tau,$ $\tau\le\TAU+(2r+3)\cdot\Phi,$ on its timer.
Thus, no correct node will abort by time $\tau,$
$\tau\le\TAU+(2r+3)\cdot\Phi,$ and all correct nodes will have
$value\ne \perp$ and will thus decide by that time.

\item $r=f:$ Node $p$ must have accepted a $(q_i,m',i)$ message
for all $i, 1 \le i \le f-1$, by $\tau_p,$
$\tau_p\le\TAU_p+(2f+1)\cdot\Phi,$ on its timer, where the $f$
$q_i$'s are distinct. If the General $G$ is correct, then by Validity the claim holds. Otherwise, at least one of these $f$ nodes (which all differ from $G$), say $q_j,$
must be correct. By the Unforgeability property {\small [TPS-2]},
node $q_j$ invoked \broadcastp $(q_j,m',j)$ by some local-time
$\tau,$ $\tau\le\TAU+(2j+1)\cdot\Phi$ and decided. Since $j \le
f$ the above arguments imply that by some local-time $\tau,$
$\tau\le\TAU+(2f+1)\cdot\Phi,$ each correct node will decide.
\end{enumerate}
\vspace{-7mm}\mbox{\ }\hfill\mbox{\ }\end{proof}

\noindent \lemmaref{lemma1} implies that if a correct node
decides at time $\tau,$ $\tau\le\TAU+(2r+1)\cdot\Phi,$ on its
timer, then no correct node $p$ aborts at time $\tau_p,$
$\tau_p>\TAU_p+(2r+1)\cdot\Phi.$ \lemmaref{lemma2} implies the
other direction.\\

\noindent{{\bf Termination}:} Each correct node either terminates
the protocol by returning a value, or by time
$(2f+1)\cdot\Phi+3d$ on its clock all entries will be reset, which is a
termination of the protocol.
\\

\noindent{{\bf Agreement}:} If no correct node decides, then all
correct nodes that execute the protocol abort, and return a $\perp$
value. Otherwise, let $q$ be the first correct node to decide.
Therefore, no correct node aborts. The value returned by $q$ is the
value $m'$ of the accepted $(p,m',1)$ message. By {\small [IA-4]} if
any correct node \iacpt s, all correct nodes \iacpt with a single
value.  Thus all correct nodes return the same value.
\\

\noindent{{\bf Timeliness}:}
\begin{enumerate}
\item (agreement)
For every two correct nodes $q$ and $q^\prime$ that decide on
$(G,m)$ at $\tau_q$ and $\tau_{q^\prime},$ respectively:
\begin{enumerate}
\item
If validity hold, then $|rt(\tau_q)-rt(\tau_{q^\prime})|\le 2d,$  by
{\small [IA-3A]}; Otherwise, $|rt(\tau_q)-rt(\tau_{q^\prime})|\le
3d,$ by {\small [TPS-1]}.

\item
$|rt(\tau^G_q)-rt(\tau^G_{q^\prime})|\le 6d$ by {\small [IA-3A]}.

\item
$rt(\tau^G_q), rt(\tau^G_{q^\prime}) \in [t_1-2d,t_2]$ by {\small
[IA-3B]}.

\item $rt(\tau^G_r)\le rt(\tau_r),$ by {\small [IA-3C]}, and if the
inequality $rt(\tau_r)-rt(\tau^G_r)\le \BYZdur$ would not hold, the
node would abort right away.
\end{enumerate}

\item (validity)
If all correct nodes invoked the protocol in an interval
$[t_0,t_0+d],$ as a result of (Initiator$,G,m$) sent by a correct
$G$ that spaced the sending by $6d$ from its last agreement, then
for every correct node $q$ that may have decided $3d$ later than
$G$, the new invocation will still happen with fresh data
structures, since they are reset $3d$ after decision. By that time
it already reset the data structures (including \latest) of the last
execution, and the new decision time $\tau_q$, satisfies $t_0-d \le
rt(\tau^G_q)\le rt(\tau_q)\le t_0+4d$ as implied by {\small
[IA-1D]}.

\item (separation) By {\small [IA-4]} the real-times of the \iacpts satisfy the requirements. Since a node will not 
reset its data
structures before terminating the protocol, it will not send a
$support$ before completing the previous protocol execution.
Therefore, the protocol itself can only increase the time difference
between agreements. Thus, the minimal difference is achieved when a
decision takes place right after the termination of the
primitive \initiator.
\end{enumerate}
\end{proof}


\section*{Acknowledgements}
We wish to thank Ittai Abraham and Ezra Hoch for discussing some of
the fine points of the model and the proofs. This research was
supported in part by grants from ISF, NSF, CCR, and AFOSR.

\vspace{+0.4cm}

\end{document}